\def\supp{\mathop{\mathrm{supp}}}  
\newtheorem{theorem}{Theorem}
\newtheorem{lemma}{Lemma}
\newtheorem{remark}{Remark}
\newtheorem{problem}{Problem}
\newcommand{\sr}{\stackrel}
\newcommand{\rar}{\rightarrow}
\newcommand{\tri}{\sr{\triangle}{=}}
\newcommand{\be}{\begin{equation}}
\newcommand{\ee}{\end{equation}}
\newcommand{\bea}{\begin{eqnarray}}
\newcommand{\eea}{\end{eqnarray}}
\newcommand{\bes}{\begin{eqnarray*}}
\newcommand{\ees}{\end{eqnarray*}}
\newcommand{\bi}{\begin{itemize}}
\newcommand{\ei}{\end{itemize}}
\newcommand{\ben}{\begin{enumerate}}
\newcommand{\een}{\end{enumerate}}
\newcommand{\bp}{\begin{problem}}
\newcommand{\ep}{\end{problem}}
\newcommand{\hso}{\hspace{.1in}}
\newcommand{\hst}{\hspace{.2in}}
\newcommand{\noi}{\noindent}
\newcommand{\bc}{\begin{center}}
\newcommand{\ec}{\end{center}}
\begin{document}
%
\title{Variable Length Lossless Coding for Variational Distance Class: An Optimal Merging Algorithm }


\author{Themistoklis Charalambous, Charalambos D. Charalambous and Sergey Loyka
\thanks{T. Charalambous was with the Department of Electrical and Computer Engineering, University of Cyprus, Nicosia. Now he is with the Automatic Control Lab, Electrical Engineering Department and ACCESS Linnaeus Center, Royal Institute of Technology (KTH), Stockholm, Sweden.  Corresponding author's address: Osquldas v\"{a}g 10, 100-44 Stockholm, Sweden (E-mail: themisc@kth.se).}
\thanks{C.D. Charalambous is with the Department of Electrical and Computer Engineering, University of Cyprus, Nicosia 1678 (E-mail: chadcha@ucy.ac.cy).}
\thanks{Sergey Loyka is with the School of Information Technology and Engineering,
University of Ottawa, Ontario, Canada, K1N 6N5 (E-mail:  sergey.loyka@ieee.org).}
}

\maketitle

%
%
%
%
\begin{abstract}
In this paper we consider lossless source coding for a class of sources specified by the total variational distance ball centred at a fixed nominal probability distribution. The objective is to find a minimax average length source code, where the minimizers are the codeword lengths -- real numbers for arithmetic or Shannon codes -- while the maximizers are the  source distributions from  the total variational distance ball. Firstly, we examine the maximization of the average codeword length by converting it into an equivalent optimization problem, and we give the optimal codeword lenghts via a waterfilling solution. Secondly, we show that the equivalent optimization problem can  be solved via an optimal partition of the source alphabet, and re-normalization and merging of the fixed nominal probabilities. For the computation of the optimal codeword lengths we also develop a fast algorithm with a computational complexity of order ${\cal O}(n)$.  
\end{abstract}

\IEEEpeerreviewmaketitle

%
%
%
%
\section{Introduction}
\noi Lossless fixed to variable length source codes are often categorized into problems of known source  probability distribution and  unknown source probability distribution. For known source probability distribution several pay-offs are investigated in the literature, such as the average  codeword length \cite{2006:Cover}, the average redundancy of the codeword length \cite{2004:DrmotaSzpankowski}, the average of an exponential function of the codeword length \cite{1965:campbell_coding,1981:humblet_generalization,2008b:Baer}, and the average of an exponential function of the redundancy of the codeword length \cite{2006a:Baer,2008b:Baer}. Huffman type algorithms are also investigated for some of these pay-offs \cite{2006:Cover,2006a:Baer,2008b:Baer}. For the average codeword length pay-off the average redundancy is bounded below by zero and above by one. On the other hand,  if the true probability distribution of the source is unknown and the code is designed solely based on a given  nominal distribution (which is different than the true distribution), then the increase in the average codeword length due to incorrect knowledge of the true distribution is the relative  entropy between the true distribution and the nominal distribution \cite[\emph{Theorem 5.4.3}]{2006:Cover}. Such problems with unknown probability distribution are often investigated via universal coding and universal modeling, and the so-called Minimum Description Length (MDL) principle based on minimax techniques, by assuming the true source probability distribution belongs to a pre-specified  class of source distributions  \cite{1973:davisson,1980:davisson_Leon-Garcia, 2004:DrmotaSzpankowski,2003:JacquetSzpankowski,2007:FarzadCharalambous,2009:Gawrychowski_Gagie}, which may be parameterized or non-parameterized. Universal codes are often  examined under various pay-offs such as average minimax redundancy,  maximal minimax pointwise redundancy \cite{2004:DrmotaSzpankowski}, and variants of them involving the relative entropy between the true probability distribution and the nominal probability distribution \cite{2007:FarzadCharalambous,2009:Gawrychowski_Gagie}.

In this  paper, we investigate lossless variable length codes for a class of  source probability distributions  described by the total variational distance  ball, centred at a fixed (\'a priori) probability distribution (nominal), with the radius of the ball varying in the interval $[0,2]$. Since this problem falls into universal coding and modeling category  we formulate it using minimax techniques. The formal description of the coding problem which is made precise in the next section, is as follows. Given a class of source probability distributions described by the total variation metric centered at an \'a priori or nominal probability distribution ${\bm \mu} \in {\mathbb P}(\Sigma)$ (${\mathbb P}(\Sigma)$ the set of probability vectors on a finite alphabet set $\Sigma$) having radius $R\geq 0$ is defined by
\bea
{\mathbb B }_{{\bm \mu}} (R)   \tri \Big\{ {\bf \nu} \in {\mathbb P}({\Sigma}): ||{\bm \nu} -{\bm \mu}||_{TV} \tri  \sum_{x \in {\Sigma}} |\nu(x)-\mu(x)|\leq R \Big\}\; . \label{b32}
\eea
The pay-off may be anyone of those  mentioned earlier; we consider minimizing the  maximum of  the average codeword lengths defined by
\bea
{\mathbb L}_R({\bf l^\dagger}, {\bm \nu}) \tri    \max_{ {\bm \nu} \in  {\mathbb B}_{{\bm \mu}}({R})}  \sum_{ x \in {\Sigma}} l(x) \nu(x)\; .  \label{b33}
\eea
Specifically, our  main  objective is to find a prefix real-valued code  length vector ${\bf l}^\dagger$ which minimizes the pay-off ${\mathbb L}_{R}({\bf l}, {\bm \nu}^\dagger)$.

There are various reasons which motivated to consider the total variational distance class of sources  ${\mathbb B }_{{\bm \mu}} (R)$. Below, we describe some of these. Total variational distance can be used to  define the distance between the empirical distribution of a sequence and the fixed nonminal source distribution ${\bm \mu} \in {\mathbb P}(\Sigma)$ as follows. Given a sequence ${\bf x}^n \tri \{x_1, x_2, \ldots, x_n\} \in {\Sigma}^n$, let ${\nu}(x; {\bf x}^n)$ denote the empirical distribution of the sequence ${\bf x}^n$ defined by ${\bm \nu}(x; {\bf x}^n)\tri \frac{N(x| {\bf x}^n)}{n}$, with $N(x| {\bf x}^n)$ the number of occurence of $x$ in the sequence ${\bf x}^n$. For $\epsilon \geq 0$, we call a sequence ${\bf x}^n$ $\epsilon-$letter typical with respect to ${\bm \mu}$ if  $|{\nu}(x; {\bf x}^n)-\mu(x)| \leq \epsilon \mu(x), \forall x \in \Sigma$. The set of all such sequences ${\bf x}^n$ satisfying this inequality is called $\epsilon-$letter typical set $T_{\epsilon}^n({\bm \mu})$ with respect to $ {\bm {\mu}}$. Therefore, the total variational distance between the empirical distribution ${ \nu}(x; {\bf x}^n)$  and ${\bm \mu}$ satisfied the bound $||{\bm \nu}(\cdot; {\bf x}^n)-{\bm \mu}||_{TV} \leq \epsilon$. Therefore, the total variational ball radius can be easily obtained from observing specific sequences.  In this respect, ball  radius  $R$ is easily identified, and the  larger the value of $R$ the larger the admissible class of source distributions. The total variational distance is a true metric, hence it is a measure of difference between two distributions. 
By the properties of the distance metric then  $||{\bm \nu} -{\bm \mu}||_{TV} \leq ||{\bm \nu}||_{TV}  + ||{\bm \mu}||_{TV} =2$, hence $R$ is further restricted to the interval $[0,2]$. The two extreme cases are $R=0$ implying ${\bm \nu}= {\bm \mu}$, and $R=2$ implying that the support sets of ${\bm \nu}$ and ${\bm \mu}$ denoted by $\supp({\bm \nu})$ and   $\supp({\bm \mu})$, respectively, are non-overlapping, that is, $\supp({\bm \nu}) \cap \supp({\bm \mu}) = \emptyset$. Moreover, one of the most interesting properties of total variational distance ball is that any admissible ${\bm \nu } \in {\mathbb B }_{{\bm \mu}} (R)$ may not be absolutely continuous with respect to ${\bm \nu}$, denoted by ${\bm \nu} << {\bm \mu}$ and defined by $\mu(x) =0$ for some $x \in {\Sigma}$ then $\nu(x)=0$. Consequently, admissible distributions   ${\bm \nu} \in   {\mathbb B }_{{\bm \mu}} (R)$ can be defined on a larger alphabet than the nominal distribution ${\bm \mu}$, that is, the support set of ${\bm \mu}$ maybe a subset of  ${\Sigma}$.

There is an anthology of distances and distance metrics on the space of probability distributions which are related to total variational distance \cite{gibbs}, and therefore one can obtain various lower and upper bounds on the performance with respect to other classes of sources, based on (\ref{b33}). Consider for examples, the case when $ {\bm \nu}<< {\bm \mu}, \forall   {\bm \nu}\in  {\mathbb B }_{{\bm \mu}} (R)$; by   Pinsker's inequality \cite{1964:Pinsker},
\bes
 ||{\bm \nu} - {\bm \mu}||_{TV}^2 \leq
2 {\mathbb D}( {\bm \nu} || {\bm \mu} ), \hst \forall {\bm \nu} \in  {\mathbb B }_{{\bm \mu}} (R),  {\bm \nu} \in {\mathbb P}({\Sigma})
 \ees
 where ${\mathbb D}({\bm \nu}|| {\bm \mu} ) \tri \sum_{ x \in {\Sigma}} \nu(x) \log \frac{\nu(x)}{\mu(x)}$ denotes the Kullback-Leibler distance (or relative entropy distance) between ${\bm \nu}$ and ${\bm \mu}$. Thus, Pinsker's inequality implies that the total variational distance class is larger\footnote{The bound is tight in the sense that the ratio of ${\mathbb D}({\bm \nu}|| {\bm \mu} )$ and $||{\bm \nu}- {\bm \mu}||_{TV}$ can be arbitrarily close to $1/2$ \cite{1967:Csiszar,1969:Kemperman}.} than the class defined by replacing $||{\bm \nu}- {\bm \mu}||_{TV}$ by ${\mathbb D}({\bm \nu}|| {\bm \mu} )$.  Indeed it is  more appropriate especially when the probability distributions ${\bm \nu}$ and ${\bm \mu}$ are singular (resp. nearly singular) in which case ${\mathbb D}({\bm \mu}|| {\bm \mu} )=\infty$ (resp. very large), while $||{\bm \nu}- {\bm \mu}||_{TV} \leq 2$. 

The main contributions of this paper are the following.
\begin{enumerate}
\item The pay-off of maximizing the average codeword length  over the total variational distance ball is transformed into a new optimization problem which is convex with respect to the codeword length.
\item The problem can be solved by convex optimization tools and in a waterfilling-like fashion (see Theorem~\ref{theorem:waterfilling}), which requires numerical methods and no closed-form solution is provided. Note that this waterfilling structure does not belong to the family of watefilling solutions for which practical algorithms were proposed by Palomar \emph{et al.} \cite{2005:PalomarWaterfilling}.
\item The optimal code corresponding to the new optimization problem is then equivalent to a specific partition of the source alphabet, and re-normalization  and merging of entries of the initial source probability vector,  as a function of the radius of the ball $R \in [0,2]$, from which the optimal code is derived. An algorithm is presented which computes the weight vector ${\bm \nu}$, having a worst case computational complexity of order $\mathcal{O}(n)$. Our approach provides a methodology for the solution of such problems and also an approach for this new waterfilling structure.
\end{enumerate}

The paper is organized as follows.  In the next section, we formulate the minimax length problem and derive its equivalent optimization. In Section~\ref{sec:results}, we show that optimization Problem \ref{problem1} can be solved using convex optimization tools and a waterfilling approach. It is then transformed to an average coding problem (Problem \ref{problem3}), which is being solved via a fast algorithm that is based on re-normalization  of the initial source probabilities according to a merging  rule.  In Section~\ref{sec:examples}, illustrative examples demonstrate the validity of the proposed algorithm and provide better understanding on the impact of the distance parameter $R$ on the codeword lengths. The paper ends with the conclusions in Section \ref{sec:conclusions}.

%
%
%
%
\section{Problem Formulation}\label{sec:formulation}

\noi Consider a source generating outputs from a finite set of symbols, denoted by ${\Sigma } \tri \{x_1, x_2, \ldots, x_{ | {\Sigma }| }  \}$  of cardinality   $|{\Sigma}|$, according to a source probability distribution ${\bm \nu} \tri \{\nu(x): x \in {\Sigma}  \} \equiv \left(\nu(x_1), \nu(x_2), \ldots, \nu(x_{|{\Sigma}|})\right)$.  Source symbols are encoded into $D-$ary codewords (unless specified otherwise  $\log (\cdot) \tri \log_D (\cdot)$). A code ${\cal C} \tri \{c(x): x \in {\Sigma}\}$ for symbols in ${\Sigma}$ with image alphabet ${\cal D} \tri \{0, 1, 2, \ldots, D-1 \}$ is an injective map    $c: {\Sigma} \rar {\cal D}^*$, where ${\cal D}^*$ is the set of finite sequences drawn from ${\cal D}$.  For $x \in {\Sigma}$  each codeword $c(x) \in {\cal D}^*,  c \in {\cal C}$ is identified with a codeword length $l(x) \in {\mathbb Z}_+$, where ${\mathbb Z}_+$ is the set of non-negative integers. Thus, a  code ${\cal C}$ for source symbols from the alphabet ${\Sigma}$ is associated with the length function of the code $ l : {\Sigma} \rar {\mathbb Z}_+$, and a code defines a codeword length vector ${\bf l} \tri \{ l(x): x \in {\Sigma}\} \equiv \big(l(x_1), l(x_2), \ldots, l(x_{|{\Sigma}|})\big) \in {\mathbb Z}_+^{|{\Sigma}|}$. If, however, the integer constraint is relaxed by admitting real-valued length vectors ${\bf l} \in {\mathbb R}_+^{|{\Sigma}|}$, which satisfy the Kraft inequality (i.e., $\sum_{x\in {\Sigma}} D^{-l(x)}\leq 1$), then ${\cal L}\left( {\mathbb Z}_+^{|{\Sigma}|} \right)$ is replaced by
\bes
{\cal L}\left( {\mathbb R}_+^{|{\Sigma}|} \right)  \tri \Big\{ {\bf l} \in {\mathbb R}_+^{|{\Sigma}|} : \sum_{x \in {\Sigma}} D^{-l(x)} \leq 1 \Big\}.
\ees
Such codes give approximate solutions which are less computationally intensive \cite{2006:Cover}.

Suppose the source probability distribution ${\bm \nu}$ -- henceforth called {the true distribution} -- is unknown, while modeling techniques give  access to a nominal source probability distribution ${\bm \mu} \tri \{\mu(x): x \in {\Sigma}  \} \equiv \left(\mu(x_1), \mu(x_2), \ldots, \mu(x_{|{\Sigma}|})\right)$. Having constructed knowledge of the  nominal source distribution one may construct from empirical data via counting techniques, the  distance of the two distributions with respect to the total variation norm  $||{\bm \nu} - {\bm \mu}||_{TV}$. This will provide an estimate of the radius $R$, such that $||{\bm \nu} - {\bm \mu}||_{TV} \leq R$ and hence, characterize the set  ${\mathbb B }_{{\bm \mu}} (R)$ of all possible true distributions of the source.  Subsequently, the source coding problem for the class of sources ${\mathbb B }_{{\bf \mu}} (R)$ can be defined via minimax techniques as follows. Let ${\mathbb P}({\Sigma})$ denote the set of probability distributions on the alphabet $\Sigma$, and let ${\mathbb P}_{\bm \mu}({\Sigma})$
 denote the set of nominal probability distributions defined by
\begin{align*}
{\mathbb P}_{\bf \mu}({\Sigma})& \tri \Big\{ {\bm{\mu}} =\Big( \mu(x_1), \ldots, \mu(x_{|{\Sigma}|})\Big) \in {\mathbb R}_+^{|{\Sigma}|} : \\
&0 <\mu(x_i) \leq \mu(x_j), \forall i>j,  (x_i,x_j) \in {\Sigma},  \sum_{x \in {\Sigma}} \mu(x) =1 \Big\}.
\end{align*}

\noi The precise problem investigated is stated below.
\begin{problem}\label{problem1}
Given a fixed nominal distribution ${\bm \mu} \in   {\mathbb P}_{\bm \mu}({\Sigma})$ and distance parameter $R\in [0,2] $, define the class of source probability distributions by the total variational ball
\bea
{\mathbb B }_{{\bm \mu}} (R)   \tri \Big\{ {\bf \nu} \in {\mathbb P}({\Sigma}): ||{\bm \nu} -{\bm \mu}||_{TV} \tri  \sum_{x \in {\Sigma}} |\nu(x)-\mu(x)|\leq R \Big\} \label{b323}
\eea
and the average codeword length pay-off with respect to the true source probability distribution ${\bm \nu} \in {\mathbb B }_{{\bm \mu}} (R) \subset  {\mathbb P}({\Sigma})$ by
\bea
{\mathbb L}_R({\bf l}, {\bm \nu}) \tri \sum_{ x \in {\Sigma}} l(x) \nu(x)\; . \label{eq:01}
\eea
 The objective is to find a prefix code length vector ${\bf l}^\dagger \in {\mathbb R}_+^{|{\Sigma}|}$ (satisfying Kraft inequality), which minimizes the maximum average  codeword length pay-off defined by
 \bea
{\mathbb L}_R({\bf l}, {\bm \nu}^\dagger) \tri    \max_{ {\bm \nu} \in  {\mathbb B}_{{\bm \mu}}({R})}  \sum_{ x \in {\Sigma}} l(x) \nu(x) \; , \label{b333}
\eea
for all $R \in [0,2]$.
\end{problem}

\noi The characterization of optimal prefix code length vector ${\bf l}^\dagger \in {\mathbb R}_+^{|{\Sigma}|}$ is obtained by first converting  ${\mathbb L}_R({\bf l}, {\bm \nu}^\dagger)$ into an equivalent pay-off and then use the resulting pay-off to find the optimal code.

%
%
%
%
\section{Main results}\label{sec:results}

The objective of this section is twofold. First, to solve Problem~\ref{problem1} using an equivalent pay-off for which 
the optimal prefix code length vector ${\bf l}^\dagger \in {\mathbb R}_+^{|{\Sigma}|}$ is obtained using a waterfilling-like approach. Second, to find an explicit expression of the maximizing distribution ${\bm \nu} \in {\mathbb B }_{{\bm \mu}} (R)$. 
Subsequently, to derive certain properties of the maximizing distribution and identify how these properties are transformed into equivalent properties for the optimal codeword length vector. The main goal here is to identify how symbols are merged  together, and how the merging  changes as a function of the parameter $R \in [0,2]$, so that the optimal solution is characterized for all $R \in [0,2]$. From these properties the Shannon codeword lengths for Problem~\ref{problem1} will be found.

\subsection{Equivalent Pay-off and Waterfilling-Like Solution}
Let ${\mathbb M}_{sm}(\Sigma)$ denote the set of finite signed measures on $\Sigma$. Then, any ${\bm \eta} \in  {\mathbb M}_{sm}(\Sigma)$ has a Jordan decomposition $\big\{  {\bm \eta}^+, {\bm \eta}^-\big\}$ such that  ${\bm \eta }= {\bm \eta}^+ - {\bm \eta}^-$, and the total variation of ${\bm \eta}$ is defined by $|| {\bm \eta}||_{TV} \tri {\bm \eta}^+(\Sigma) + {\bm \eta}^-(\Sigma)$.  Define the following subset ${\mathbb M}_0(\Sigma) \tri \Big\{ {\bm \eta} \in {\mathbb M}_{sm}(\Sigma): {\bm \eta}(\Sigma)=0\Big\} \subset {\bm \eta} \in  {\mathbb M}_{sm}(\Sigma)$. 
For ${\bm \xi} \in {\mathbb M}_0(\Sigma)$, then $\xi(\Sigma)=0$, which implies that ${\bm \xi}^+(\Sigma)= {\bm \xi}^-(\Sigma)$, and hence ${\bm \xi}^+(\Sigma)= {\bm \xi}^-(\Sigma)=\frac{||{\bm  \xi}||_{TV}}{2}$. Define ${\bm \xi}  \tri {\bm \nu} -{\bm \mu} \in  {\mathbb M}_0(\Sigma)$. Since ${\bf l} \in {\mathbb R}_+^{|{\Sigma}|}$ are non-negative the following inequalities are obtained.

\begin{align}
 \sum_{ x \in {\Sigma}} l(x) \nu(x)  &= \sum_{ x \in {\Sigma}} l(x) \xi(x) + \sum_{ x \in {\Sigma}} l(x) \mu(x) \nonumber  \\
&=\sum_{ x \in \Sigma} l(x) \left(\xi^{+}(x)- \xi^{-}(x)\right) + \sum_{ x \in {\Sigma}} l(x) \mu(x) \nonumber   \\
&=\sum_{ x \in \Sigma} l(x) \xi^{+}(x)- \sum_{ x \in \Sigma} l(x) \xi^{-}(x)+  \sum_{ x \in {\Sigma}} l(x) \mu(x) \nonumber  \\
& \leq \max_{x\in \Sigma} l(x){\bm \xi}^{+}(\Sigma) - \min_{x\in \Sigma} l(x) {\bm \xi}^-(\Sigma) + \sum_{ x \in {\Sigma}} l(x) \mu(x)  \nonumber  \\
& = \max_{x\in \Sigma} l(x)\frac{||{\bm \xi} ||_{TV}}{2} - \min_{x\in \Sigma} l(x)\frac{||{\bm \xi} ||_{TV}}{2} + \sum_{ x \in {\Sigma}} l(x) \mu(x) \nonumber   \\
&=\Big\{ \max_{x\in \Sigma} l(x) - \min_{x\in \Sigma} l(x) \Big\}\frac{||{\bm \xi} ||_{TV}}{2} + \sum_{ x \in {\Sigma}} l(x) \mu(x) \label{f2}
\end{align}
For a given ${\bm \mu} \in   {\mathbb P}_{\bm \mu}(\Sigma)$ define  the set $\widetilde{{\mathbb B }}_{{\bm \mu}} (R)$ by 
\bea
\widetilde{{\mathbb B }}_{{\bm \mu}} (R) \tri \Big\{ {\bm \xi} \in {\mathbb M}_0(\Sigma): {\bm \xi} = {\bm \nu}- {\bm \mu},  \hso {\bm \nu } \in {\mathbb P}(\Sigma), \hso ||{\bm \xi}|| \leq R\Big\}. \label{sb1}
\eea
For any $ \xi \in \widetilde{\mathbb B}_{{\bm \mu}}(\Sigma)$ then ${\bm \xi} = ({\bm \nu}-{\bm \mu})^+ -({\bm \nu}- {\bm \mu})^- \equiv {\bm \xi}^+- {\bm \xi}^-$.\\
Moreover,  the upper bound in the right hand side of (\ref{f2}) is achieved by   ${\bm \xi}^\dagger \in \widetilde{{\mathbb B }}_{{\bm \mu}} (R)$ as follows. Let
\bes
x^0 \in \Sigma^0 & \tri & \Big\{ x \in \Sigma: l(x) = \max \{l(x): x\in \Sigma\} \equiv l_{\max} \Big\}, \\
  x_0 \in \Sigma_0 & \tri & \Big\{ x \in \Sigma: l(x) = \min\{l(x): x\in \Sigma\} \equiv l_{\min} \Big\}.
\ees
Take
\bea
{\bm \xi}^\dagger(x) ={\bm \nu}^\dagger(x)-{\bm \mu}(x) = \frac{R}{2} \Big(\delta_{x^0}(x) - \delta_{x_0}(x)\Big), \hst x \in \Sigma \label{nm1}
\eea
where $\delta_y(x)$ denotes the point mass distribution concentrated at $y\in \Sigma$. This is indeed a signed measure with total variation $||{\bm \nu}^\dagger-{\bm \mu}||_{TV}=R$, and $\sum_{\Sigma} l(x) ({\nu}^\dagger- { \mu})(x)=\frac{R}{2}\Big(l_{\max}-l_{\min}\Big)$. \\
Hence, by using (\ref{nm1}) as a candidate of the maximizing distribution then
\bea \sum_{\Sigma} l(x) { \nu}^\dagger(x) =  \frac{R}{2}  \Big\{ \max_{x \in {\Sigma}} l(x)  -  \min_{x \in \Sigma} l(x) \Big\}    + \sum_{x \in \Sigma} l(x) \mu(x), \label{f2n}
\eea
where ${\bm \xi}^\dagger$  satisfies the constraint $||{\bm \xi}^\dagger||_{TV}= ||{\bm \nu}^\dagger-{\bm \mu}||_{TV} =R$.\\

Thus, $ {\mathbb L}_R({\bf l}, {\bm \nu}^\dagger)$ in \eqref{b333} is equivalent to pay-off \eqref{f2n}. At this stage it is clear that Problem~\ref{problem1} is equivalent to minimizing \eqref{f2n} subject to the Kraft inequality. This problem can be solved by a wide variety of convex optimization methods; in the following theorem we provide a waterfilling-like solution obtained by the Karush-Kuhn-Tucker theorem. 
Before we proceed further we discuss some generalizations.

\begin{remark}
\label{gen}
The derivations  leading to (\ref{f2n}) is generic in the sense that it is an optimization of a linear functional over the total variational ball, and hence it is applicable to a variety of problems. Below, we discuss two generalizations. \\
\noi 1) Theorem~\ref{theorem:waterfilling} holds for countable alphabets $\Sigma$ since the derivations do not depend on any assumption on  the cardinality of $\Sigma$.\\
\noi 2) The derivation leading to (\ref{f2n}) holds for abstract alphabets, such as complete separable metric spaces $(\Sigma, d)$ with ${\cal B}(
\Sigma)$ the $\sigma-$algebra of Borel sets in $\Sigma$ with the following modifications. $\nu, \mu$ are probability measures on $\Sigma$, ${ l}$ is a non-negative bounded continuous function ${ l} : \Sigma \rar [0,\infty)$, $\sum_{x \in \Sigma} l(x) \nu(x), \sum_{x \in \Sigma} l(x) \mu(x)$ are replaced by integrals   $\int_{x \in \Sigma} l(x) \nu(dx), \int_{x \in \Sigma} l(x) \mu(dx)$, and the $\min, \max$ operations   are replaced by $\sup, \inf$ operations (unless $\Sigma$ is compact). In this case, 
For any $l$ which is bounded continuous and non-negative, from (\ref{f2n}) we have:

\bea \int_{\Sigma} l(x) \nu^\dagger(dx) & =&  \frac{R}{2}  \Big\{ \sup_{x \in {\Sigma}} l(x)  -  \inf_{x \in \Sigma} l(x) \Big\}    +  \int_{\Sigma} l(x) \mu(dx) 
   \label{f32nn}
\eea
and  
\begin{align}
\int_{\Sigma^0} \nu^\dagger(dx) = \mu(\Sigma^0) + \frac{R}{2} \in [0,1],   & \hso  \int_{\Sigma_0} \nu^\dagger(dx) = \mu(\Sigma_0) - \frac{R}{2} \in [0,1], \nonumber \\
& \nu^\dagger(A)= \mu(A), \hso \forall A \subseteq \Sigma \setminus \Sigma^0\cup \Sigma_0 \label{cs}
\end{align}
Moreover, even in this abstract case, the first right hand side term of (\ref{f32nn}) is related to the oscillator semi-norm of $l$ by
\bea
osc(l)\tri \sup_{(x,y) \in \Sigma \times \Sigma}|l(x)-l(y)| =2 \inf_{\alpha \in {\mathbb R}} || l-\alpha||_{\infty} =  \sup_{x \in {\Sigma}} l(x)  -  \inf_{x \in \Sigma} l(x)  \label{osc2}
\eea
Although, generalization 2) is not pursued in this paper, one can infer that the generic result is of interest for classes of distributions on abstract alphabets.  
\end{remark}

\begin{theorem}\label{theorem:waterfilling}
Consider pay-off ${\mathbb L}_{R}({\bf l}, {\bm \nu})$  and real-valued prefix codes. Let $\underline{w}$ and $\overline{w}$ such that 
\begin{align}\label{waterf1}
\sum_{x \in {\Sigma}} \big(\underline{w}-\mu(x)\big)^+=\frac{R}{2},
\end{align}
\noi and
\begin{align}\label{waterf2}
\sum_{x \in {\Sigma}} \big(\mu(x) - \overline{w}\big)^+=\frac{R}{2},
\end{align}
where $(f)^+ = \max(0,f)$ and $R \in [0,2]$. The distribution ${\bm \nu}^\dagger \in {\mathbb B }_{{\bm \mu}} (R)$ which minimizes the maximum average  codeword length pay-off ${\mathbb L}_R({\bf l}, {\bm \nu}^\dagger)$ for all $R \in [0,2]$ is given by
\begin{align}
\nu^\dagger(x)= 
\begin{cases} 
\overline{w} & \text{if $\mu(x)>\overline{w}$,} \\
\mu(x) & \text{if $\underline{w} \leq \mu(x)\leq \overline{w}$,} \\
\underline{w} &\text{if $\mu(x) < \underline{w}$.}
\end{cases}
\end{align}

\end{theorem}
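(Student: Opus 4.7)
The plan is to apply the Karush--Kuhn--Tucker (KKT) theorem to the convex program delivered by \eqref{f2n},
\begin{equation}
\min_{\mathbf{l}\in\mathcal{L}(\mathbb{R}_+^{|\Sigma|})}\ \frac{R}{2}\bigl(l_{\max}-l_{\min}\bigr)+\sum_{x\in\Sigma}\mu(x)\,l(x),
\end{equation}
and then to recover $\bm\nu^\dagger$ through the identification $\nu^\dagger(x)\propto D^{-l^\dagger(x)}$ produced by the Kraft Lagrangian. The first step smooths the nondifferentiable $l_{\max},l_{\min}$ terms by introducing auxiliary variables $t,s$ with $s\le l(x)\le t$ for every $x$, turning the problem into $\min_{\mathbf{l},t,s}\ \tfrac{R}{2}(t-s)+\sum_x\mu(x)\,l(x)$ over the intersection of $\{s\le l(x)\le t\}$ with the Kraft set. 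The objective is linear, the feasible set convex, Slater's condition obviously holds, and therefore KKT is both necessary and sufficient.

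Next I form the Lagrangian with multipliers $\alpha_x\ge 0$ (attached to $l(x)\le t$), $\beta_x\ge 0$ (attached to $l(x)\ge s$), and $\lambda\ge 0$ (attached to Kraft). Stationarity in $l(x)$ gives $\mu(x)+\alpha_x-\beta_x=\lambda(\ln D)D^{-l^\dagger(x)}$, while stationarity in $t$ and in $s$ forces $\sum_x\alpha_x=\sum_x\beta_x=R/2$. Writing $\nu^\dagger(x)\tri\lambda(\ln D)D^{-l^\dagger(x)}$ and invoking complementary slackness (which forces $\alpha_x\beta_x=0$ whenever $s<t$) produces the trichotomy
\begin{align*}
&\nu^\dagger(x)-\mu(x)=\alpha_x\ge 0\ \ \text{when}\ l^\dagger(x)=t,\\
&\nu^\dagger(x)=\mu(x)\ \ \text{when}\ s<l^\dagger(x)<t,\\
&\mu(x)-\nu^\dagger(x)=\beta_x\ge 0\ \ \text{when}\ l^\dagger(x)=s.
\end{align*}
Since symbols sharing a length value share the same $\nu^\dagger$, setting $\underline{w}\tri\lambda(\ln D)D^{-t}$ and $\overline{w}\tri\lambda(\ln D)D^{-s}$ gives the common value of $\nu^\dagger$ at the maximum-length and minimum-length sets respectively.

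Substituting $\alpha_x=(\underline{w}-\mu(x))^+$ and $\beta_x=(\mu(x)-\overline{w})^+$ into $\sum_x\alpha_x=\sum_x\beta_x=R/2$ recovers \eqref{waterf1}--\eqref{waterf2}, and the trichotomy above is exactly the three-case description in the theorem because $\nu^\dagger$ attains its largest value $\overline{w}$ on the minimum-length symbols and its smallest value $\underline{w}$ on the maximum-length symbols. Closing the argument requires verifying feasibility: $\sum_x\nu^\dagger(x)=1$ follows from the two waterfilling equations because the aggregate lift of $R/2$ cancels the aggregate cap of $R/2$, and then $\|\bm\nu^\dagger-\bm\mu\|_{TV}=R$ is immediate. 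Kraft tightness simultaneously pins down $\lambda=1/\ln D$ and yields the Shannon code $l^\dagger(x)=-\log_D\nu^\dagger(x)$, thereby explaining where the expression in the theorem comes from.

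The principal obstacle is not the KKT bookkeeping but showing that \eqref{waterf1}--\eqref{waterf2} possess unique thresholds $\underline{w},\overline{w}$ consistent with the induced partition of $\Sigma$ for every $R\in[0,2]$, and handling the degenerate regimes where the middle range is empty or $\underline{w}=\overline{w}$. A continuity/monotonicity argument, exploiting that each left-hand side of \eqref{waterf1}--\eqref{waterf2} is a continuous, strictly monotone, piecewise-linear function of its threshold on the active range, delivers uniqueness and also accommodates the extreme case $R=2$ where $\supp(\bm\nu^\dagger)\cap\supp(\bm\mu)=\emptyset$, the scenario already highlighted in the introduction.
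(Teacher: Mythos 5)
Your proposal is correct and follows essentially the same route as the paper's own proof: introduce auxiliary variables $t,s$ with $s\le l(x)\le t$, form the Lagrangian with per-symbol multipliers for the two bound constraints plus a multiplier for Kraft, apply KKT, read off the trichotomy from complementary slackness, and use stationarity in $t$ and $s$ to produce the two waterfilling equations (your $\alpha_x,\beta_x,\lambda$ are the paper's $\lambda(x),\sigma(x),\tau$). The only thing you add beyond the paper is the explicit remark about uniqueness of the thresholds and the degenerate $R=2$ regime, which the paper leaves implicit; this is a useful refinement but not a different method.
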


\begin{proof}
See Appendix \ref{waterfilling}.
\end{proof}

An example of the solution to the coding problem with real valued prefix codes for a total variational distance ball is obtained from Theorem~\ref{theorem:waterfilling} and it is depicted in Figure~\ref{waterproofing}.
\begin{figure}[H]
\centering
\includegraphics[width=0.7\columnwidth]{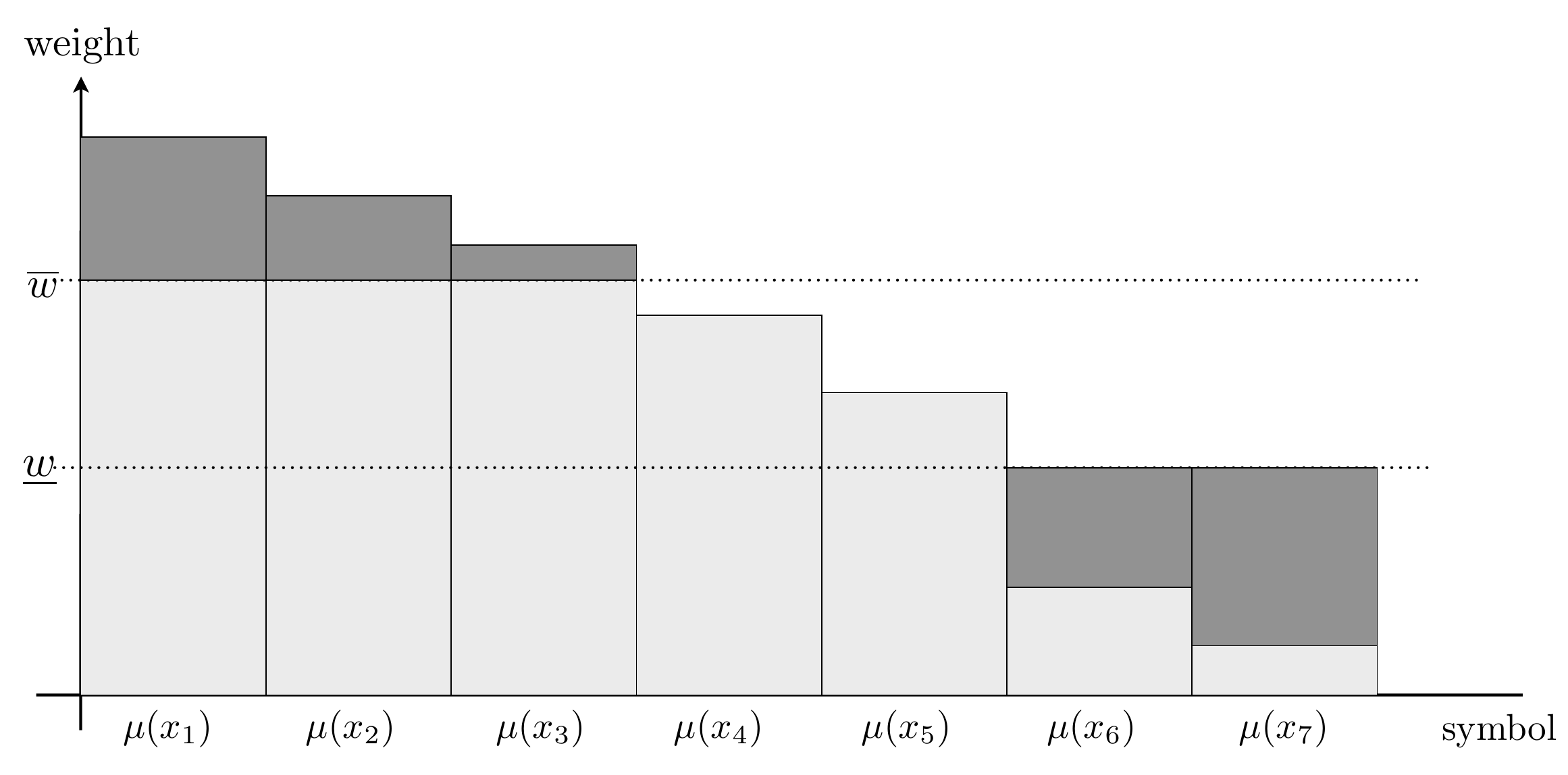}
\caption{Example demonstrating the solution of the coding problem using a watefilling-like fashion. In the example of the figure, $ {\bm \nu}^\dagger=\{\overline{w}, \overline{w}, \overline{w}, \mu(x_4), \mu(x_5) ,\underline{w} ,\underline{w}  \}$. }\label{waterproofing}
\end{figure}
 
A similar problem is considered in \cite{2010:interplay}, where the Shannon entropy of an unknown distribution is maximized subject to a variational distance constraint between a nominal distribution and the unknown distribution.  With completely different approach, \cite{2010:interplay} are able to provide a similar solution to the waterfilling approach described in this section, which however cannot incorporate classes of sources on abstract alphabets.

\subsection{Optimal Weights and Merging Rule}

\noi The pay-off ${\mathbb L}_R({\bf l}, {\bm \nu}^\dagger)$ can be written as
\begin{align}
 {\mathbb L}_R({\bf l}, {\bm \nu}^\dagger)
 = \sum_{ x \in {\Sigma\setminus\Sigma^o \cup \Sigma_o}} l(x) \mu(x)+\left( \sum_{ x \in \Sigma^o}  \mu(x)+\frac{R}{2} \right) l_{\max} +
\left( \sum_{ x \in \Sigma_o}  \mu(x)-\frac{R}{2} \right) l_{\min},  \label{n2}
 \end{align}
where
\begin{align*}
& \sum_{x \in \Sigma^0} \nu^\dagger(x) = \sum_{ x \in \Sigma^o} \mu(x) + \frac{R}{2} \in [0,1], \hso  \sum_{x \in \Sigma_o} \nu^\dagger(x) = \sum_{ x \in \Sigma_o} \mu(x) - \frac{R}{2}\in [0,1], \\
& \nu^\dagger(x)= \mu(x), \hso \forall x \in \Sigma \setminus \Sigma^o\cup \Sigma_o, \hso 0\leq \nu^\dagger(x)\leq 1, \hso \forall x \in \Sigma .
\end{align*}
The above expression makes the dependence on the disjoint sets $\Sigma^o$, $\Sigma_o$ and  $\Sigma\setminus\Sigma^o \cup \Sigma_o$ explicit. The sets  remain to be identified so that a solution to the coding problem exists for all $R\in [0,2] $. Note that $l_{\min}$,  $l_{\max}$ and sets  $\Sigma^o$ and $\Sigma_o$ depend parametrically on $R\in [0,2] $. This explicit dependence will often be omitted for simplicity of notation.

Define $\alpha \equiv R/2$, then Problem \ref{problem1} becomes equivalent to Problem \ref{problem3}, stated below.
\begin{problem}\label{problem3}
Given a fixed nominal distribution ${\bm \mu} \in   {\mathbb P}_{\bm \mu}({\Sigma})$ and distance parameter $\alpha \in [0,1]$, define the pay-off as follows:
\bea
{\mathbb L}_\alpha({\bf l}, {\bm \mu}) \tri \sum_{ x \in {\Sigma\setminus\Sigma^o \cup \Sigma_o}} l(x) \mu(x)+\left( \sum_{ x \in \Sigma^o}  \mu(x)+\alpha \right) l_{\max} +
\left( \sum_{ x \in \Sigma_o}  \mu(x)-\alpha \right) l_{\min}
, \label{eq:03}
\eea
 The objective is to find a prefix code length vector ${\bf l}^\dagger \in {\mathbb R}_+^{|{\Sigma}|}$ which minimizes the pay-off ${\mathbb L}_\alpha({\bf l}, {\bm \mu}) $,
for all $\alpha \in [0,1]$ such that the Kraft inequality holds; i.e., $\sum_{x\in {\Sigma}} D^{-l(x)}\leq 1$.
\end{problem}

In this section, the optimal real-valued prefix codeword lengths vector ${\bf l}^\dagger$  minimizing pay-off ${\mathbb L}_{\alpha}({\bf l}, {\bm \mu})$ as a function of $\alpha \in [0,1]$ and the initial source probability vector ${\bm \mu}$, are recursively calculated via re-normalization and merging. For any specific $\hat{\alpha} \in [0,1]$, a fast algorithm (of linear complexity in the worst case) is devised which obtains the optimal real-valued prefix codeword lengths minimizing pay-off ${\mathbb L}_{\hat{\alpha}}({\bf l}, {\bm \mu})$.

\noi Define
\begin{subequations}\label{n3}
\begin{align}
 \sum_{ x \in \Sigma^o}   \nu_{\alpha}(x) &= \sum_{ x \in \Sigma^o}  \mu(x)+\alpha \in [0,1],  \label{n3a} \\
 \sum_{ x \in \Sigma_o}  \nu_{\alpha}(x) &= \sum_{ x \in \Sigma_o}  \mu(x)-\alpha \in [0,1],  \label{n3b} \\
\nu_{\alpha}(x) &= \mu(x),~ x \in  {\Sigma\setminus\Sigma^o \cup \Sigma_o}.  \label{n3c}
 \end{align}
\end{subequations}
Using \eqref{eq:03} and \eqref{n3} the pay-off ${\mathbb L}_\alpha({\bf l}, {\bm \mu})$ is written as a function of the new weight vector as follows.
\bea
{\mathbb L}_\alpha({\bf l}, {\bm \mu})\equiv {\mathbb L}({\bf l}, {\bm \nu}_\alpha) \tri \sum_{x \in { \Sigma}} \nu_\alpha(x) l(x), \hst \alpha \in [0,1]. \label{n4}
\eea
The new weight vector ${\bm \nu}_{\alpha}$ is a function of $\alpha$ and the source probability vector ${\bm \mu} \in {\mathbb P}_{\bm \mu}({\Sigma})$, and it is defined over the three disjoint sets  $\Sigma^o$, $\Sigma_o$ and  $\Sigma\setminus\Sigma^o \cup \Sigma_o$. It can be easily verified that $0\leq \nu_{\alpha}(x) \leq 1,~\forall x \in \{\Sigma^o,~\Sigma_o\}$ (if any of the weights was negative, then someone could easily choose a very large $l(x)$ and the pay-off ${\mathbb L}_\alpha({\bf l}, {\bm \mu})\equiv {\mathbb L}({\bf l}, {\bm \nu}_\alpha)$ would be negative) and $\sum_{ x \in \Sigma }\nu_{\alpha}(x) =1,  \forall \alpha \in [0,1]$.

\begin{lemma}\label{lemma_optLengths}
 The real-valued prefix codes minimizing pay-off ${\mathbb L}_\alpha({\bf l}, \bm{\mu})$ for $\alpha \in [0,1]$ are given by
\bea \label{eq:solution1}
l^\dagger(x) = \left\{ \begin{array}{ll}
-\log{\Big(\mu(x)\Big)} &  x \in \Sigma\setminus\Sigma_o\cup\Sigma^o  \\
-\log{\left( \frac{ \sum_{ x \in \Sigma^o}  \mu(x)+\alpha }{|\Sigma^o|} \right)}, &  x \in \Sigma^o \\
-\log{\left( \frac{ \sum_{ x \in \Sigma_o}  \mu(x)-\alpha }{|\Sigma_o|} \right)}, &  x \in \Sigma_o
\end{array} \right.
\eea
where $\Sigma_o$ and $\Sigma^o$  remain to be specified.
\end{lemma}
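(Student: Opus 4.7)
The plan is to recognize that, once the partition $(\Sigma^o, \Sigma_o, \Sigma\setminus(\Sigma^o\cup\Sigma_o))$ is regarded as fixed, Problem~\ref{problem3} is a convex minimization of a linear pay-off in the variables $\{l(x): x \in \Sigma\setminus(\Sigma^o\cup\Sigma_o)\} \cup \{l_{\max}, l_{\min}\}$ subject to the Kraft inequality, and then to solve it by KKT conditions in the usual Shannon-code fashion.

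First, I would use the fact that, by definition of $\Sigma^o$ and $\Sigma_o$, every $x \in \Sigma^o$ carries the common length $l_{\max}$ and every $x \in \Sigma_o$ the common length $l_{\min}$, so that the Kraft inequality takes the grouped form
\begin{align*}
\sum_{x \in \Sigma\setminus(\Sigma^o\cup\Sigma_o)} D^{-l(x)} + |\Sigma^o|\,D^{-l_{\max}} + |\Sigma_o|\,D^{-l_{\min}} \leq 1.
\end{align*}
Since $D^{-l}$ is convex in $l$ and the pay-off \eqref{eq:03} is linear in $l$, this is a convex program and the KKT conditions are necessary and sufficient. I form the Lagrangian with a single multiplier $\lambda \geq 0$ for the Kraft constraint and differentiate with respect to each of the free length variables.

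The stationarity conditions give, for $x \in \Sigma\setminus(\Sigma^o\cup\Sigma_o)$,
\begin{align*}
\mu(x) = \lambda (\ln D)\, D^{-l^\dagger(x)},
\end{align*}
and, treating $l_{\max}$ and $l_{\min}$ as single variables with coefficients $(\sum_{x\in\Sigma^o}\mu(x)+\alpha)$ and $(\sum_{x\in\Sigma_o}\mu(x)-\alpha)$ and Kraft weight factors $|\Sigma^o|$ and $|\Sigma_o|$,
\begin{align*}
\frac{\sum_{x\in\Sigma^o}\mu(x)+\alpha}{|\Sigma^o|} = \lambda(\ln D)\, D^{-l_{\max}}, \qquad \frac{\sum_{x\in\Sigma_o}\mu(x)-\alpha}{|\Sigma_o|} = \lambda(\ln D)\, D^{-l_{\min}}.
\end{align*}
Substituting these three relations into the Kraft equality (which must be tight at the optimum by complementary slackness, since increasing $\lambda$ is always improving otherwise) yields
\begin{align*}
\sum_{x\in\Sigma\setminus(\Sigma^o\cup\Sigma_o)}\mu(x) + \Big(\!\sum_{x\in\Sigma^o}\mu(x)+\alpha\Big) + \Big(\!\sum_{x\in\Sigma_o}\mu(x)-\alpha\Big) = \lambda(\ln D),
\end{align*}
and the left-hand side collapses to $\sum_{x\in\Sigma}\mu(x)=1$, so $\lambda\ln D = 1$. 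Solving for each length gives exactly the expressions \eqref{eq:solution1}.

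The one issue to flag is well-posedness of the logarithms: the arguments $\sum_{x\in\Sigma^o}\mu(x)+\alpha$ and $\sum_{x\in\Sigma_o}\mu(x)-\alpha$ must both be strictly positive, but this is precisely guaranteed by \eqref{n3a}--\eqref{n3b} (the admissibility of $\bm\nu_\alpha$ as a sub-probability on each set), so the expression for $\mathbf{l}^\dagger$ is always well defined on the admissible range of $\alpha$. Hence the characterization of the optimal sets $\Sigma^o,\Sigma_o$ is deferred to the next step of the analysis and is the genuinely nontrivial piece; the present lemma is simply the Shannon-type first-order conclusion once the partition is fixed.
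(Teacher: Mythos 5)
Your proposal is correct and takes essentially the same route as the paper: both form the Lagrangian for the Kraft constraint, differentiate (treating the $\Sigma^o$ and $\Sigma_o$ blocks as carrying common lengths $l_{\max}$, $l_{\min}$), deduce $\lambda^\dagger>0$ so Kraft is tight, telescope the Kraft equality to get $\lambda^\dagger\log_e D=1$, and read off the Shannon-type lengths. Your added remark on strict positivity of the logarithm arguments is a sensible side observation but otherwise the argument coincides with the paper's.
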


\begin{proof}
See Appendix \ref{proof_lemma_optLengths}.
\end{proof}

\noi The point to be made regarding Lemma~\ref{lemma_optLengths} is twofold: (a) since for $\alpha \in [0,1]$ the pay-off ${\mathbb L}_\alpha({\bf l}, \bm{\mu})$ is continuous in ${\bf l}$ and the constraint set defined by Kraft inequality is closed and bounded and hence compact, an optimal code length vector ${\bf l}^\dagger$ exists, and (b) the optimal code is given by (\ref{eq:solution1}).

From the characterization of optimal code length vector of Lemma~\ref{lemma_optLengths}, it follows that ${\mathbb L}_\alpha({\bf l}^\dagger, {\bm \mu})= -\sum_{x \in \Sigma} \nu_\alpha(x) \log \nu_\alpha^\dagger(x)  \geq {\mathbb H}({\bm \nu}_\alpha)$, where $ {\mathbb H}({\bm \nu}_\alpha)$ denotes  the entropy of the probability distribution ${\bm \mu}$. Equality holds if, and only if, $\nu_\alpha(x)=\nu_\alpha^\dagger(x), \forall x \in \Sigma$. Therefore, for $\alpha \in [0,1]$ the weights satisfying (\ref{n3}) and corresponding to the optimal code length vector are uniquely represented via ${\bm \nu}_\alpha={\bm \nu}_\alpha^\dagger$. Further, by rounding up the optimal codeword lengths (i.e., $l^\ddagger(x) \tri \lceil -\log \nu_\alpha^\dagger(x)  \rceil$) Kraft inequality remains valid and hence ${\mathbb H}({\bm \nu}_\alpha) \leq \sum_{x \in \Sigma} l^\ddagger (x) \nu_\alpha(x) < {\mathbb H}({\bm \nu}_\alpha) +1$.

\noi The next lemma describes monotonicity properties of the weight vector ${\bm \nu}_\alpha$ as a function of the probability vector ${\bm \mu}$, for all $\alpha \in [0,1]$.

\begin{lemma}
\label{lemma_weights1}
Consider pay-off ${\mathbb L}_\alpha({\bf l}, \bm{\mu})$ and real-valued prefix codes. The following hold:
\begin{enumerate}
\item For $\{x,y\}  \subset  \Sigma$, if $\mu(x)\leq \mu(y)$ then $\nu_\alpha(x) \leq \nu_\alpha(y)$, for all $\alpha \in [0,1]$. Equivalently,  $\mu(x_1) \geq \mu(x_2) \geq \ldots \geq \mu(x_{|\Sigma|})>0$ implies $\nu_\alpha(x_1) \geq \nu_\alpha(x_2) \geq \ldots \geq \nu_\alpha(x_{|\Sigma|})>0$, for all $\alpha \in [0,1]$.
\item For $y \in \Sigma\setminus\Sigma_o\cup\Sigma^o $,  $\nu_\alpha(y)$ is constant and independent of $\alpha \in [0,1]$.
\item For $x \in \Sigma^o $, $\nu_\alpha(x)$ is a monotonically increasing  function of $\alpha \in [0,1]$.
\item For $x \in \Sigma_o $, $\nu_\alpha(x)$ is a monotonically decreasing  function of $\alpha \in [0,1]$.
 \end{enumerate}
\end{lemma}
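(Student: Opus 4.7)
The plan is to deduce all four monotonicity assertions from the explicit waterfilling characterization of $\bm{\nu}_\alpha$ supplied by Theorem~\ref{theorem:waterfilling}. Writing $\alpha = R/2$ and identifying $\Sigma^o = \{x \in \Sigma : \mu(x) < \underline{w}(\alpha)\}$ and $\Sigma_o = \{x \in \Sigma : \mu(x) > \overline{w}(\alpha)\}$ (so that the complement $\Sigma\setminus(\Sigma^o\cup\Sigma_o)$ is precisely $\{x : \underline{w}\le \mu(x) \le \overline{w}\}$), the weight vector $\nu_\alpha(x)$ takes the three values $\underline{w}(\alpha)$, $\mu(x)$, or $\overline{w}(\alpha)$ according to which region contains $x$. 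The thresholds $\underline{w}(\alpha)$ and $\overline{w}(\alpha)$ are defined implicitly by \eqref{waterf1}--\eqref{waterf2}.

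Parts 2--4 are then immediate from the monotonicity of the two implicit equations. For Part 2, $\nu_\alpha(y) = \mu(y)$ for $y$ in the middle region, which depends only on the nominal distribution, so the assertion holds for every $\alpha$ for which $y$ remains in that region. For Part 3, the map $w \mapsto F(w) \triangleq \sum_{x\in\Sigma}(w-\mu(x))^+$ is continuous, piecewise linear, and strictly increasing on $[0,\max_x \mu(x)]$, so its inverse $\alpha \mapsto \underline{w}(\alpha)$ defined by $F(\underline{w})=\alpha$ is strictly increasing; since for $x \in \Sigma^o$ one has $\nu_\alpha(x) = \underline{w}(\alpha)$, this yields the claim. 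Part 4 is dual: $w \mapsto G(w) \triangleq \sum_{x\in\Sigma}(\mu(x)-w)^+$ is strictly decreasing, so $\overline{w}(\alpha)$ defined by $G(\overline{w})=\alpha$ is strictly decreasing, and $\nu_\alpha(x) = \overline{w}(\alpha)$ on $\Sigma_o$.

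For Part 1, I would argue by a short case analysis on which of the three regions contain $x$ and $y$. If $\mu(x)\le\mu(y)$, then on a common region the map is either the identity or the same constant, so the inequality is preserved. Across regions the chain $\underline{w} \le \mu \le \overline{w}$ on the middle zone, together with $\nu_\alpha \equiv \underline{w}$ on $\Sigma^o$ and $\nu_\alpha\equiv\overline{w}$ on $\Sigma_o$, stitches the comparisons together (for instance, $x\in\Sigma^o$, $y$ in the middle gives $\nu_\alpha(x)=\underline{w}\le\mu(y)=\nu_\alpha(y)$). The sorted formulation is the same statement applied pairwise.

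The one subtlety worth tracking is that the partition $(\Sigma^o,\ \Sigma\setminus(\Sigma^o\cup\Sigma_o),\ \Sigma_o)$ itself varies with $\alpha$: as $\alpha$ grows, $\underline{w}(\alpha)$ crosses successive values of $\mu$ and absorbs new atoms into $\Sigma^o$ (symmetrically for $\Sigma_o$). This, however, does not disturb the claims, because once a symbol $x$ has entered $\Sigma^o$ at some $\alpha_0$ it remains in $\Sigma^o$ for all $\alpha \ge \alpha_0$ (since $\underline{w}$ only grows), and at the transition point the two formulas agree ($\underline{w}(\alpha_0)=\mu(x)$), giving continuity of $\alpha\mapsto\nu_\alpha(x)$ and preserving the monotonicity statements piecewise.
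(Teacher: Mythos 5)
Your proposal is correct, and it takes a genuinely different route from the paper. The paper works directly from the renormalization formulas \eqref{n3}: it computes the partial derivatives $\partial\nu_\alpha(x)/\partial\alpha$ on each of the three pieces of the partition, then establishes Part~1 via a five-case argument followed by a continuity/perturbation step (show the weights can only coalesce, never cross, because the smaller one rises while the larger one stays put). You instead invoke the waterfilling characterization from Theorem~\ref{theorem:waterfilling} and read everything off the implicit thresholds: $\underline{w}(\alpha)$ is the inverse of the (piecewise-linear, eventually strictly increasing) map $F(w)=\sum_x(w-\mu(x))^+$, $\overline{w}(\alpha)$ the inverse of the strictly decreasing $G(w)=\sum_x(\mu(x)-w)^+$, and the formula $\nu_\alpha = \min(\overline{w},\max(\underline{w},\mu))$ then makes Part~1 an immediate consequence of the order-preserving nature of the clamping map together with $\underline{w}\leq\overline{w}$. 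This buys you a cleaner Part~1 (no contradiction argument) and makes the piecewise structure across transition points (where $\underline{w}(\alpha_0)=\mu(x)$ glues the two formulas continuously) explicit, which the paper handles only implicitly. Two small things worth tightening if you wrote this up: $F$ is constant (equal to $0$) on $[0,\min_x\mu(x)]$ and strictly increasing only beyond that, so the inverse is strictly increasing for $\alpha>0$ and should be pinned at $\underline{w}(0)=\min_x\mu(x)$ by convention; and your argument, like the paper's, tacitly restricts to $\alpha\leq\alpha_{\max}$, beyond which $\underline{w}$ and $\overline{w}$ would cross and both levels saturate at $1/|\Sigma|$ (all four claims then hold trivially but need a one-line remark).
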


\begin{proof}
See Appendix \ref{proof_lemma_weights1}.
\end{proof}

Next, the merging rule which described how the weight vector $\bm{\nu}_\alpha$ changes as a function of  $\alpha \in [0,1]$ is identified, such that a solution to the coding problem is completely characterized  for arbitrary cardinalities $|\Sigma^o|$ and $|\Sigma_o|$, and not necessarily distinct probabilities, for  any $\alpha \in [0,1]$. Clearly, there  is a  minimum $\alpha$ called $\alpha_{\max}$ such that  for any $\alpha \in [\alpha_{\max}, 1]$ there is no compression. 

Consider the complete characterization of the solution, as $\alpha$ ranges over $[0,1]$, for any initial probability vector ${\bm \mu}$ (not necessarily consisting of distinct entries).  Then, $|\Sigma_o|+|\Sigma^o| \in \{1, 2, \ldots, |\Sigma|-1\}$ while for $|\Sigma_o|+|\Sigma^o| =|\Sigma|$, $\alpha\in[\alpha_{\max},1]$, there is no compression since the weights are all equal. 

\noi Define
\begin{align*}\label{bk}
\beta_{k_1} & \tri \min\left\{\beta \in [0,1]: \nu_\beta(x_{|\Sigma|-(k_1-1)})= \nu_\beta(x_{|\Sigma|-k_1})\right\}, \hst    k_1 \in \{1,\ldots, |\Sigma|-1\},  \hst \beta_0 \tri 0,  \\
\gamma_{k_2} & \tri \min\left\{\gamma \in [0,1]: \nu_\gamma(x_{(k_2-1)})= \nu_\gamma(x_{k_2})\right\}, \hst    k_2 \in \{2,\ldots, |\Sigma|-1\},  \hst \gamma_0 \tri 0, \\
\alpha_k  & \tri \max\left\{\beta_{k_1},\gamma_{k_2}\right\},\hst k=k_1+k_2, \hst \alpha_0 \tri 0.
\end{align*}

\noi By Lemma \ref{lemma_weights1} the weights are ordered, hence  $\alpha_1$ is the smallest value of $\alpha \in [0,1]$ for which two weights become equal; this can occur because the two smallest weights become equal ($\beta_1<\gamma_1$), or because the two biggest weights become equal ($\gamma_1<\beta_1$).

\noi Since for $k=0$, $\nu_{\alpha_0}(x)=\nu_{0}(x)=\mu(x), \forall x  \in \Sigma$, is the set of initial symbol probabilities, let $\Sigma^{o,0}$ denote the singleton set $\{x_{|\Sigma|} \}$ and $\Sigma_{o,0} $ denote the singleton set $\{x_{1} \}$. Specifically,
\begin{align}
\Sigma^{o,0} &\tri \left\{x\in \{x_{|\Sigma|} \}:  \mu^{\flat}\tri \min_{ x\in\Sigma} \mu(x)= \mu(x_{|\Sigma|})  \right\} , \\
\Sigma_{o,0} &\tri \left\{x\in \{x_{|\Sigma|} \}:  \mu^{\sharp}\tri \max_{ x\in\Sigma} \mu(x)= \mu(x_{1})  \right\} .
\end{align}
Similarly, $\Sigma^{o,1}$ is defined as the set of symbols in $\{x_{|\Sigma|-1}, x_{|\Sigma|}\}$ whose weight evaluated at $\beta_1$ is equal to the minimum weight  $\nu_{\beta_1}^\flat$ and $\Sigma_{o,1}$ is defined as the set of symbols in $\{x_{1}, x_{2}\}$ whose weight evaluated at $\gamma_1$ is equal to the maximum weight  $\nu_{\gamma_1}^\sharp$:
\begin{align}
\Sigma^{o,1} & \tri \Big\{ x \in \{ x_{|\Sigma|-1}, x_{|\Sigma|} \}:  \nu_{\beta_1}(x)= \nu_{\beta_1}^\flat \Big\}, \\
\Sigma_{o,1} & \tri \Big\{ x \in \{ x_{1}, x_{2} \}:  \nu_{\gamma_1}(x)= \nu_{\gamma_1}^\sharp \Big\}.
\end{align}
\noi In general, for  a given value of $\alpha_k, k \in \{1,\ldots, |\Sigma|-1\}$, define
\begin{align}
\Sigma^{o,k_1} & \tri \Big\{ x \in \{ x_{|\Sigma|-k_1-1}, x_{|\Sigma|-k_1}, \ldots, x_{|\Sigma|} \}:  \nu_{\beta_{k_1}}(x)= \nu_{\beta_{k_1}}^\flat \Big\}, \\
\Sigma_{o,k_2} & \tri \Big\{ x \in \{x_{1} \ldots, x_{k_2}, x_{k_2+1} \}:  \nu_{\gamma_{k_2}}(x)= \nu_{\gamma_{k_2}}^\sharp \Big\}.
\end{align}
and for $k=k_1+k_2$,  $\alpha_k = \max \left\{{\beta_{k_1},\gamma_{k_2}}\right\}$.

\begin{lemma}\label{prop1}
Consider  pay-off ${\mathbb L}_\alpha({\bf l}, {\bm \mu})$ and real-valued prefix codes. For $k_1,k_2 \in \{0, 1, 2, \ldots, |\Sigma|-1\}$, then
\begin{align}
\nu_\beta(x_{ |\Sigma|-k_1}) = \nu_{\beta}(x_{|\Sigma|})=\nu_{\beta}^\flat, \hst \beta \in [\beta_{k_1}, \beta_{k_1+1}) \subset [0,1), \\
\nu_\gamma(x_{k_2}) = \nu_{\gamma}(x_{1})=\nu_{\gamma}^\sharp, \hst \gamma \in [\gamma_{k_2}, \gamma_{k_2+1}) \subset [0,1).
\end{align}
Further, the cardinality of sets $\Sigma^{o,k_1}$ and $\Sigma_{o,k_2}$ is $(k_1+1)$ and $(k_2+1)$, respectively.
\end{lemma}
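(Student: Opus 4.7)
The plan is to prove both statements by induction on the merging index, carrying out the $\beta/\Sigma^o$ case in detail; the $\gamma/\Sigma_o$ case is entirely analogous via the symmetry $\min\leftrightarrow\max$, $\nu^\flat\leftrightarrow\nu^\sharp$, with part (4) of Lemma \ref{lemma_weights1} replacing part (3). Throughout I will lean on the explicit optimal-weight formula of Lemma \ref{lemma_optLengths} together with the ordering/monotonicity properties of Lemma \ref{lemma_weights1}.

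For the base case $k_1=0$: since $\beta_0=0$ and $\nu_0=\mu$, with $\mu(x_{|\Sigma|})$ the minimum (ties handled by the very definition of $\Sigma^{o,0}$), one has $|\Sigma^{o,0}|=1$. For $\beta\in[0,\beta_1)$, the definition of $\beta_1$ forces $\nu_\beta(x_{|\Sigma|-1})>\nu_\beta(x_{|\Sigma|})$, so $\Sigma^o$ stays $\{x_{|\Sigma|}\}$ and Lemma \ref{lemma_optLengths} gives $\nu_\beta^\flat=\mu(x_{|\Sigma|})+\beta$. For the inductive step, assume the claim at $k_1-1$, so that on $[\beta_{k_1-1},\beta_{k_1})$ one has $\Sigma^{o,k_1-1}=\{x_{|\Sigma|-k_1+1},\ldots,x_{|\Sigma|}\}$ with cardinality $k_1$ and common weight $\nu_\beta^\flat=(\sum_{i=|\Sigma|-k_1+1}^{|\Sigma|}\mu(x_i)+\beta)/k_1$, which remains strictly below $\mu(x_{|\Sigma|-k_1})$. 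By continuity in $\beta$ and the defining condition for $\beta_{k_1}$, at $\beta=\beta_{k_1}$ this common minimum equals $\mu(x_{|\Sigma|-k_1})$. For $\beta$ slightly above $\beta_{k_1}$ I claim $x_{|\Sigma|-k_1}$ must enter $\Sigma^o$; otherwise Lemma \ref{lemma_weights1}(2) would pin $\nu_\beta(x_{|\Sigma|-k_1})$ at $\mu(x_{|\Sigma|-k_1})$, while Lemma \ref{lemma_weights1}(3) makes $\nu_\beta(x_{|\Sigma|})$ increase strictly past that value, violating the order preservation of Lemma \ref{lemma_weights1}(1). Hence $\Sigma^{o,k_1}=\{x_{|\Sigma|-k_1},\ldots,x_{|\Sigma|}\}$ with cardinality $k_1+1$, and Lemma \ref{lemma_optLengths} yields the common weight $(\sum_{i=|\Sigma|-k_1}^{|\Sigma|}\mu(x_i)+\beta)/(k_1+1)$ on the whole interval $[\beta_{k_1},\beta_{k_1+1})$, where by the definition of $\beta_{k_1+1}$ no further symbol joins.

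The hard part is ruling out that a symbol already merged into $\Sigma^o$ could subsequently leave it as $\alpha$ grows; equivalently, showing $\Sigma^o$ increases monotonically in $\alpha$ so that the merged blocks only accrete and never fragment. The contradiction argument sketched above, which combines parts (1), (2) and (3) of Lemma \ref{lemma_weights1}, is exactly what delivers this monotonicity. Once it is in hand, the explicit formula of Lemma \ref{lemma_optLengths} immediately fixes the common value on each merged block and the stated cardinalities $k_1+1$ (resp.\ $k_2+1$).
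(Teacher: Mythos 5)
Your proof is correct and follows essentially the same route as the paper's: induction on the merging index, with the ordering and monotonicity properties of Lemma~\ref{lemma_weights1} supplying the contradiction that forces $x_{|\Sigma|-k_1}$ (resp.\ $x_{k_2}$) to join $\Sigma^o$ (resp.\ $\Sigma_o$) at $\beta_{k_1}$ (resp.\ $\gamma_{k_2}$), and Lemma~\ref{lemma_optLengths} then pinning the common weight on each block. One small remark: your final paragraph claims the already-sketched contradiction "is exactly what delivers" the monotonicity that merged blocks never fragment, but as sketched it only shows a new symbol must join, not that a previously merged one cannot leave; the latter does follow by the same ingredients (if $x_{|\Sigma|}$ left $\Sigma^o$, parts (2) and (3) of Lemma~\ref{lemma_weights1} would give it a weight strictly below every element of $\Sigma^o$, contradicting that $\Sigma^o$ is by definition the set of minimum-weight symbols), but you should state this as a separate step rather than asserting the earlier argument covers it; the paper's own proof is equally terse here, so this is a shared omission rather than a genuine error in your approach.
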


\begin{proof}
See Appendix \ref{proof_prop1}.
\end{proof}

The  next theorem describes how the weight vector ${\bm \nu}_\alpha$ changes as a function of $\alpha \in [0,1]$ so that the solution of the coding problem can be characterized.

\begin{theorem}\label{main_theorem}
Consider pay-off ${\mathbb L}_\alpha({\bf l}, {\bm \mu})$  and real-valued prefix codes.
For  $\alpha \in [\alpha_k,\alpha_{k+1})$, $k\in\{0,1,\ldots,|\Sigma|-1\}$, the optimal weights
\begin{align*}
{\bm{\nu}^{\dagger}_{\alpha}} \tri \{{\nu}^{\dagger}_{\alpha}(x): x \in\Sigma  \} \equiv \big({\nu}^{\dagger}_{\alpha}(x_1), {\nu}^{\dagger}_{\alpha}(x_2), \ldots, {\nu}^{\dagger}_{\alpha}(x_{|\Sigma|})\big),
\end{align*}
 are given by
\begin{align}
\nu^{\dagger}_{\alpha}(x) =
\begin{cases}
\mu(x),~ &x \in   {\Sigma\setminus\Sigma^{o} \cup \Sigma_o}, \\ 
\displaystyle \frac{\sum_{x\in\Sigma^{o,k_1}}\mu (x)+\alpha}{1+k_1} , ~&x \in \Sigma^{o,k_1}, \\
\displaystyle \frac{\sum_{x\in\Sigma_{o,k_2}}\mu (x)-\alpha}{1+k_2} , ~&x \in \Sigma_{o,k_2},
\end{cases}
\label{weights_update}
\end{align}
where
\begin{align}
\displaystyle \beta_{k_1+1} &= (k_1+1) \mu(x_{|{\Sigma}|-(k_1+1)}) -\sum_{x\in \Sigma^{o,k_1}}\mu(x),  \label{betakplus1} \\
\displaystyle \gamma_{k_2+1} &=\sum_{x\in \Sigma_{o,k_2+1}}\mu(x) - (k_2+1)\mu(x_{k_2+1}),  \label{gammakplus1} \\
\alpha_{k+1}&=\min{\{\beta_{k_1+1},  \gamma_{k_2+1}\}}. \label{akplus1}
\end{align}
Moreover, the minimum $\alpha$, called $\alpha_{\max}$, such that for  $\alpha \in [\alpha_{\max},1] $  there is no compression, is given by
\begin{align}
\alpha_{\max}=(k_1^*+1)\frac{1}{|\Sigma|}-\sum_{x\in \Sigma^{o,k_1^*}}\mu(x),
\end{align}
where $k_1^*$ is the number of probabilities $\mu(x)\in \Sigma$ that are less than $1/|\Sigma|$.
\end{theorem}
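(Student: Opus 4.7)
The plan is to prove all three claims of Theorem~\ref{main_theorem} simultaneously by induction on $k\in\{0,1,\ldots,|\Sigma|-1\}$, using Theorem~\ref{theorem:waterfilling} as the workhorse and Lemmas~\ref{lemma_weights1} and~\ref{prop1} to control how the sets $\Sigma^{o,k_1}$, $\Sigma_{o,k_2}$ evolve with $\alpha$. With the ordering $\mu(x_1)\geq\cdots\geq\mu(x_{|\Sigma|})$, the optimal weights inherit this ordering (Lemma~\ref{lemma_weights1}(1)), so $\Sigma^{o,k_1}$ must consist of the $k_1+1$ largest-indexed symbols and $\Sigma_{o,k_2}$ of the $k_2+1$ smallest-indexed symbols (Lemma~\ref{prop1}).

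For the base case $k=0$ on $[0,\alpha_1)$, ${\bm \nu}^{\dagger}_0={\bm \mu}$, hence $\Sigma^{o,0}=\{x_{|\Sigma|}\}$ and $\Sigma_{o,0}=\{x_1\}$. For $\alpha$ just above zero, Theorem~\ref{theorem:waterfilling} yields $\underline{w}-\mu(x_{|\Sigma|})=\alpha$ and $\mu(x_1)-\overline{w}=\alpha$, matching~\eqref{weights_update} with $k_1=k_2=0$; this regime persists exactly while $\underline{w}<\mu(x_{|\Sigma|-1})$ and $\overline{w}>\mu(x_2)$, the defining conditions of $\beta_1$ and $\gamma_1$.

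For the inductive step, assume~\eqref{weights_update} holds on $[\alpha_k,\alpha_{k+1})$. By Theorem~\ref{theorem:waterfilling}, the common value $\underline{w}$ on $\Sigma^{o,k_1}$ is pinned by $\sum_{x\in\Sigma^{o,k_1}}(\underline{w}-\mu(x))=\alpha$; solving this linear equation recovers the middle line of~\eqref{weights_update}, and the symmetric calculation on $\Sigma_{o,k_2}$ gives the last line, while the symbols outside the two sets retain their nominal mass. By Lemma~\ref{lemma_weights1}(3)-(4), $\underline{w}$ is strictly increasing and $\overline{w}$ strictly decreasing in $\alpha$, so these formulas remain valid exactly until $\underline{w}$ first meets $\mu(x_{|\Sigma|-(k_1+1)})$ or $\overline{w}$ first meets $\mu(x_{k_2+1})$. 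Setting $\underline{w}=\mu(x_{|\Sigma|-(k_1+1)})$ in~\eqref{weights_update} and solving for $\alpha$ gives~\eqref{betakplus1}, and the analogous calculation on the upper side gives~\eqref{gammakplus1}; whichever event occurs first is $\alpha_{k+1}$ as in~\eqref{akplus1}. Immediately past $\alpha_{k+1}$, the waterfilling rule of Theorem~\ref{theorem:waterfilling} forces the newly-reached symbol(s) to join the flat level, producing $\Sigma^{o,k_1+1}$ (or $\Sigma_{o,k_2+1}$) of the cardinality mandated by Lemma~\ref{prop1}, which closes the induction.

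Finally, $\alpha_{\max}$ is the value at which ${\bm \nu}^{\dagger}_\alpha$ becomes uniform, equivalently $\underline{w}=\overline{w}=1/|\Sigma|$; substituting $\underline{w}=1/|\Sigma|$ into the bottom-set formula of~\eqref{weights_update} at the terminal stage $k_1=k_1^*$ and rearranging yields the stated expression. The main obstacle in writing out the proof will be the careful handling of degeneracies, namely ties among nominal probabilities and the possibility $\beta_{k_1+1}=\gamma_{k_2+1}$: in these cases a block of symbols is absorbed into $\Sigma^{o,\cdot}$ or $\Sigma_{o,\cdot}$ at a single $\alpha$, or both a top and a bottom merger occur simultaneously. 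These are handled by defining the updated sets through the level-value conditions $\nu_\alpha(x)=\nu_\alpha^{\flat}$ (resp.\ $\nu_\alpha^{\sharp}$) as in the statement preceding Lemma~\ref{prop1}, rather than by raw indexing, after which the cardinality assertion of Lemma~\ref{prop1} continues to hold and the induction proceeds without modification.
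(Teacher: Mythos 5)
Your proof is correct, and it reaches the same formulas by essentially the same linear-equation computation, but the logical scaffolding is genuinely different: you build on Theorem~\ref{theorem:waterfilling} (the KKT/waterfilling characterization) and run an explicit induction on $k$, whereas the paper's own proof never invokes Theorem~\ref{theorem:waterfilling} at all. The paper instead leans on Lemma~\ref{prop1}, which shows by a perturbation argument that all symbols in $\Sigma^{o,k_1}$ (resp.\ $\Sigma_{o,k_2}$) share a common weight moving in lockstep over $[\alpha_k,\alpha_{k+1})$, then writes $|\Sigma^{o,k_1}|\nu^\flat_\alpha=\sum_{x\in\Sigma^{o,k_1}}\mu(x)+\alpha$ and solves for $\nu^\flat_\alpha$; the threshold $\beta_{k_1+1}$ comes from setting $\nu^\flat_\alpha=\mu(x_{|\Sigma|-(k_1+1)})$, exactly your event $\underline{w}=\mu(x_{|\Sigma|-(k_1+1)})$. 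Your route has the virtue of exhibiting explicitly that the two characterizations in the paper (waterfilling and merging) are consistent, and the induction framing makes the persistence of the piecewise-linear formula easier to audit; the paper's route stays entirely inside the merging framework and so is more self-contained, avoiding any dependence of Theorem~\ref{main_theorem} on Theorem~\ref{theorem:waterfilling}. You also flag the degenerate case $\beta_{k_1+1}=\gamma_{k_2+1}$ and ties in $\bm\mu$ and sketch how to handle them via the level-value definition of $\Sigma^{o,\cdot}$, $\Sigma_{o,\cdot}$; the paper's proof is silent on these, so your treatment is if anything more careful. One small point worth making explicit if you write this out in full: the waterfilling sum $\sum_{x\in\Sigma}(\underline{w}-\mu(x))^+=\alpha$ restricts to $\Sigma^{o,k_1}$ on the interval $[\alpha_k,\alpha_{k+1})$ precisely because $\underline{w}<\mu(x_{|\Sigma|-(k_1+1)})\leq\mu(x)$ for all $x\notin\Sigma^{o,k_1}$ there, which is what justifies the ``pinned by a linear equation'' step.
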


\begin{proof}
\noi The derivation of Theorem~\ref{main_theorem} is based on the Lemmas introduced prior to Theorem~\ref{main_theorem}. 
By Lemma \ref{prop1}, for $\alpha \in [\alpha_k, \alpha_{k+1})$, the lowest probabilities that are equal, change together forming a total weight given by
\begin{align*}
\sum_{x \in\Sigma^{o,k_1}}\nu_{\alpha}(x)&=|\Sigma^{o,k_1}| \nu_{\alpha}^\flat = \sum_{ x \in \Sigma^{o,k_1}}  \mu(x)+\alpha ,
\end{align*}
whereas the highest probabilities that are equal, change together forming a total weight given by
\begin{align*}
\sum_{x \in \Sigma_{o,k_2}}\nu_{\alpha}(x)&=|\Sigma_{o,k_2}| \nu_{\alpha}^\sharp =  \sum_{ x \in \Sigma_{o,k_2}}  \mu(x)-\alpha.
\end{align*}
At $\alpha=\beta_{k_1 +1}$, each weight is equal to $\mu(x_{|\Sigma|-(k_1+1)})$ and from Lemma \ref{prop1} we have
\begin{align*}
\mu(x_{|\Sigma|-(k_1+1)})= \sum_{ x \in \Sigma^{o,k_1}}  \mu(x)+\beta_{k_1+1} \Rightarrow \beta_{k_1+1}= (k_1+1) \mu(x_{|{\Sigma}|-(k_1+1)}) -\sum_{x\in \Sigma^{o,k_1}}\mu(x).
\end{align*}
Similarly, it is shown for $\alpha=\gamma_{k_2 +1}$ that
\begin{align*}
\gamma_{k_2+1} =\sum_{x\in \Sigma_{o,k_2+1}}\mu(x) - (k_2+1)\mu(x_{k_2+1}).
\end{align*}
Once we find $ \beta_{k_1+1}$ and $\gamma_{k_2+1} $, $\alpha_{k+1}$ will denote the value of $\alpha$ for which there is merging and this will be the smallest between $ \beta_{k_1+1}$ and $\gamma_{k_2+1} $.
The minimum $\alpha$, called $\alpha_{\max}$, such that for  $\alpha \in [\alpha_{\max},1] $  there is no compression, is obtained when all the weights converge to the average probability, i.e. $\nu^{\dagger}_{\alpha}=1/|\Sigma|$. We know that this probability will lie between two nominal probabilities whose weights will converge one from above and one from below. Hence, we can easily find the maximum cardinalities of $\Sigma^{o,k_1}$ and $\Sigma_{o,k_2}$. Once, the cardinality is known we can use one of the equations for finding $ \beta_{k_1+1}$ and $\gamma_{k_2+1} $ to find $\alpha_{\max}$. Here, we use \eqref{betakplus1} and $\alpha_{\max}$ can be expressed as follows:
\begin{align}
\alpha_{\max}=(k_1^*+1)\frac{1}{|\Sigma|}-\sum_{x\in \Sigma^{o,k_1^*}}\mu(x) \in [0,1].
\end{align}
\end{proof}

\noi Theorem~\ref{main_theorem} facilitates the computation of the optimal real-valued prefix codeword lengths vector ${\bf l}^\dagger$  minimizing pay-off ${\mathbb L}_{\alpha}({\bf l}, {\bm \mu})$ as a function of $\alpha \in [0,1]$ and the initial source probability vector ${\bm \mu}$, via re-normalization and merging. Specifically, the optimal weights are found recursively calculating $\beta_{k_1}, k_1 \in \{0,1,\ldots, |\Sigma|-1\}$ and $\gamma_{k_2}, k_2 \in \{0,1,\ldots, |\Sigma|-1\}$ and hence $\alpha_k, k \in \{0,1,\ldots, |\Sigma|-1\}$.  For any specific $\hat{\alpha} \in [0,1]$ an algorithm is given next, which describes  how to obtain the optimal real-valued prefix codeword lengths minimizing pay-off ${\mathbb L}_{\hat{\alpha}}({\bf l}, {\bm \mu})$.

The main difference between the solutions emerging from Theorems~\ref{theorem:waterfilling}~and~\ref{main_theorem} is the following. Theorem~\ref{theorem:waterfilling} simplifies the problem and complexity by boiling the problem down to the numerical solution of a waterfilling equation, while Theorem~\ref{main_theorem} finds an explicit expression of the weights. While both approaches solve the problem, Theorem~\ref{main_theorem} finds an explicit expression, thus revealing several properties of the solution and the impact on $\alpha$ on the optimal real-valued prefix codeword lengths. 

\subsection{An Algorithm for Computing the Optimal Weights}

\noi For any probability distribution  ${\bm \mu} \in {\mathbb P}({\Sigma})$ and $\alpha \in [0,1]$ an algorithm is presented to compute the optimal weight vector ${\bm \nu}_\alpha$ of Theorem~\ref{main_theorem}. By Theorem~\ref{main_theorem} (see also Fig.~\ref{probs1} for a schematic representation of the weights for different values of $\alpha$), the weight vector ${\bf \nu}_\alpha$ changes piecewise linearly as a function of  $\alpha \in [0,1]$.

\begin{figure}[H]
\centering
\includegraphics[width=\columnwidth]{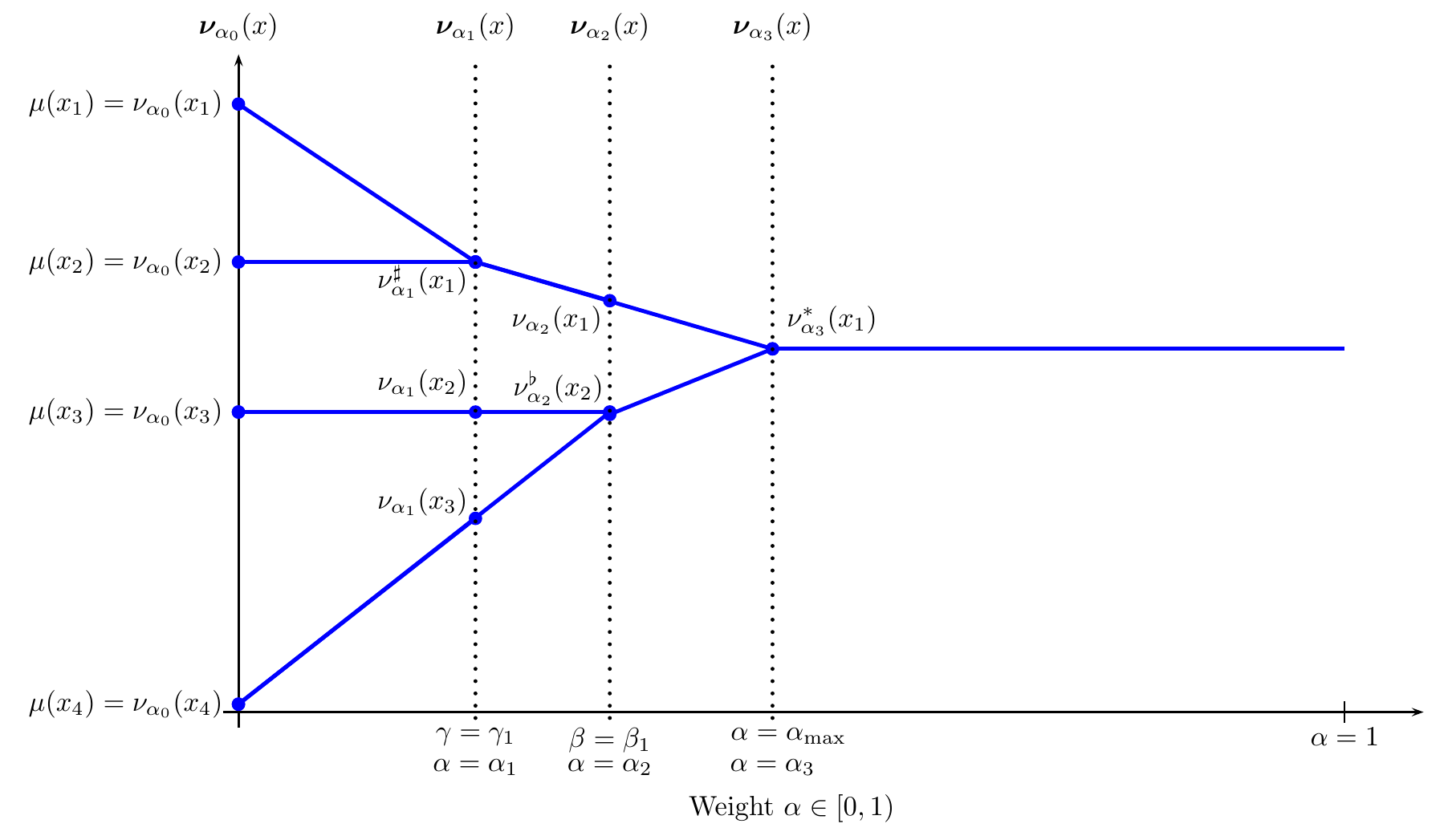}
\caption{A schematic representation of the weights for different values of $\alpha$. The weight vector ${\bf \nu}_\alpha$ changes piecewise linearly as a function of  $\alpha \in [0,1]$.}\label{probs1}
\end{figure}

Given a  specific value of $\hat{\alpha} \in [0,1]$, in order to calculate the weights $\nu_{\hat{\alpha}}(x)$, it is sufficient  to determine  the values of $\alpha$ at the intersections by using \eqref{akplus1}, up to the value of $\alpha$ for which the intersection gives a value greater than $\hat{\alpha}$, or up to the last intersection (if all the intersections give a smaller value of $\alpha$) at $\alpha_{\max}$ beyond which there is no compression. For example, if $\alpha_1<\hat{\alpha}<\alpha_2$,  find all $\alpha$'s at the intersections up to and including $\alpha_2$ and subsequently, the  weights at $\hat{\alpha}$ can be found by using \eqref{weights_update}. Specifically,  check first if $\hat{\alpha}\geq \alpha_{\max}$. If yes, then the weights are equal to $1/|\mathcal{X}|$. If  $\hat{\alpha}< \alpha_{\max}$, then find $ \alpha_1,   \ldots, \alpha_m$, $m\in \mathbb{N}$, $m\geq 1$, until   ${\alpha_{m-1}}<\hat{\alpha}\leq {\alpha_{m}}$. As soon as the $\alpha$'s at the intersections are found, the  weights at $\hat{\alpha}$ can be found by using \eqref{weights_update}. The algorithm is easy to implement and extremely fast due to its low computational complexity. The worst case scenario appears when $\alpha_{|\mathcal{X}|-2}<\hat{\alpha}< \alpha_{\max}=\alpha_{|\mathcal{X}|-1}$, in which all $\alpha$'s at the intersections are required to be found. In general, the worst case complexity of the algorithm is $\mathcal{O}(n)$. The complete algorithm is depicted under Algorithm \ref{charalambousTV_algorithm}.

{\footnotesize
\begin{algorithm}
\caption{\small Algorithm for Computing the Weight Vector ${\bm \nu}_\alpha$ }
\label{charalambousTV_algorithm}
\begin{algorithmic}
\STATE $\,$\\
\STATE \textbf{initialize}
\STATE {$\quad\, \bm{\mu}=\left(\mu(x_1), \mu(x_2), \ldots, \mu(x_{|{\Sigma}|})\right)^T$, $\alpha = \frac{R}{2}$ $\quad\, k=0$, $k_1=0$, $k_2=0$, $\quad \beta_0=0$  $\gamma_0 = 0$}
\WHILE{$\displaystyle \alpha_{k}< \frac{R}{2}$}
\STATE {$\quad\, \displaystyle \beta_{k_1+1}= (k_1+1) \mu(x_{|{\Sigma}|-(k_1+1)}) -\sum_{x\in \Sigma^{o,k_1}}\mu(x)$, $\quad\, \displaystyle \gamma_{k_2+1}=\sum_{x\in \Sigma_{o,k_2}}\mu(x) - (k_2+1)\mu(x_{k_2+1})$}
\IF{$\beta_{k_1+1}<\gamma_{k_2+1}$}
\STATE {$\alpha_{k+1}=\beta_{k_1+1}$, $\quad\, k \leftarrow k + 1$, $\quad\, k_1 \leftarrow k_1 + 1$}
\ELSIF{$\beta_{k_1+1}>\gamma_{k_2+1}$}
\STATE {$\alpha_{k+1}=\gamma_{k_2+1}$, $\quad\, k \leftarrow k + 1$, $\quad\, k_2 \leftarrow k_2 + 1$}
\ELSIF{$\beta_{k_1+1}=\gamma_{k_2+1}$}
\STATE {$\alpha_{k+1}=\beta_{k_1+1}$, $\alpha_{k+2}=\gamma_{k_2+1}$,$\quad\, k \leftarrow k + 2$, $k_1 \leftarrow k_1 + 1$, $k_2 \leftarrow k_2 + 1$}
\ENDIF
\ENDWHILE
\IF{$\alpha_{k}=\beta_{k_1}$}
\STATE {$k_1 \leftarrow k_1 - 1$}
\ELSIF{$\alpha_{k}=\gamma_{k_2}$}
\STATE {$k_2 \leftarrow k_2 - 1$}
\ELSE
\STATE {$k_1 \leftarrow k_1 - 1$, $k_2 \leftarrow k_2 - 1$}
\ENDIF

\FOR{$n =1$ to $k_2+1$}
\STATE {$\displaystyle \nu^{\dagger}_{\frac{R}{2}}(x_{n})=\frac{\sum_{x\in\Sigma_{o,k_2}}\mu (x)-\frac{R}{2}}{1+k_2}$, $n \leftarrow n+ 1$}
\ENDFOR
\FOR{$n =k_2+2$ to $|{\Sigma}|-k_1-1$}
\STATE {$\displaystyle\nu^{\dagger}_{\frac{R}{2}}(x_{n})=\mu(x_n)$, $n \leftarrow n+ 1$}
\ENDFOR
\FOR{$n =|{\Sigma}|-k_1$ to $|{\Sigma}|$}
\STATE {$\displaystyle\nu^{\dagger}_{\frac{R}{2}}(x_{n})=\frac{\sum_{x\in\Sigma^{o,k_1}}\mu (x)+\frac{R}{2}}{1+k_1} $, $n \leftarrow n+ 1$}
\ENDFOR

\RETURN $\bm{\nu}^{\dagger}_{\frac{R}{2}}$.
\end{algorithmic}
\end{algorithm}}

%
%
%
%
\section{Illustrative Examples}\label{sec:examples}

\noi This section presents illustrative examples of the optimal codes derived in this paper.
\subsection{Illustrative theoretical example}

The following example is introduced to illustrate how the weights $\bm{\nu}_\alpha$ and the cardinality of the sets $\Sigma_o$ and  $\Sigma^o$ change as a function of  $\alpha \in [0,1]$.

Consider the special case when the probability vector $\bm{\mu}(x) \in {\mathbb P}(\Sigma)$ consists of distinct probabilities, e.g.,   that $\mu(x_{|\Sigma|})<\mu(x_{|\Sigma|-1})$ and $\mu(x_{2})<\mu(x_{1})$. The goal is to characterize the weights in a subset of $\alpha\in [0,1]$, such that  $\nu_\alpha(x_{|\Sigma|})<\nu_\alpha(x_{|\Sigma|-1})$ and $\nu_\alpha(x_{2})<\nu_\alpha(x_{1})$ hold. Since $\Sigma^o=\{x_{|\Sigma|}\}$ $(|\Sigma^o|=1)$ and $\Sigma_o=\{x_{1}\}$ $(|\Sigma_o|=1)$ then
\begin{align*}
{\mathbb L}_\alpha({\bf l}, \bm{\mu}) =\Big(\mu(x_{|\Sigma|})+\alpha \Big) l_{\max}+ \Big(\mu(x_{1})-\alpha \Big) l_{\min}+    \sum_{x   \in   \Sigma\setminus\Sigma_o\cup\Sigma^o}\mu(x) l(x) = \sum_{x \in \Sigma} l(x) \nu_\alpha(x) .
\end{align*}
\noi where the weights are given by $\nu_{\alpha}(x) = \mu(x),~x \in  \Sigma\setminus\Sigma_o\cup\Sigma^o$, $\nu_{\alpha}(x_{|\Sigma|}) = \mu(x_{|\Sigma|})+\alpha$ and $\nu_{\alpha}(x_{1}) = \mu(x_{1})-\alpha$ (by Lemma~\ref{lemma_weights1}). For any $\alpha \in [0,1]$ such that the condition $\nu_\alpha(x_{|\Sigma|}) < \nu_\alpha(x_{|\Sigma|-1})$ and $\nu_\alpha(x_{2})<\nu_\alpha(x_{1})$ hold, the optimal codeword lengths are given by $-\log \nu_\alpha(x), x \in \Sigma$, and  this region of $\alpha \in [0,1]$ for which $|\Sigma^o|=1$ and $|\Sigma^o|=1$ satisfies the following inequalities
\begin{align}
\mu(x_{|\Sigma|})+\alpha<\mu(x_{|\Sigma|-1}) \quad \text{and} \quad \mu(x_{1})-\alpha>\mu(x_{2})
\end{align}
Equivalently,
\begin{align*}
\left\{\alpha \in [0,1]: \alpha<  \min\{ \mu(x_{|\Sigma|-1})-\mu(x_{|\Sigma|}), \mu(x_{1})-\mu(x_{2})\}  \right\}.
\end{align*}
Hence, under the conditions $\Sigma^o=\{x_{|\Sigma|}\}$ $(|\Sigma^o|=1)$ and $\Sigma_o=\{x_{1}\}$ $(|\Sigma_o|=1)$, the optimal codeword lengths are given by $-\log \nu_\alpha(x), x \in \Sigma$ for   $\alpha < \alpha_{1} \tri  \min\{ \mu(x_{|\Sigma|-1})-\mu(x_{|\Sigma|}), \mu(x_{1})-\mu(x_{2})\} $, while for     $\alpha \geq \alpha_{1}$ the form of the minimization problem changes, as more weights $\nu_{\alpha}(x)$ enter either $\Sigma^o$ or $\Sigma_o$, and the cardinality of that set is changed; that is, the partition of $\Sigma$ into $ \Sigma\setminus\Sigma_o\cup\Sigma^o$, $\Sigma^o$ and $\Sigma_o$ is changed. Note that when $\mu(x_{|\Sigma|}) = \mu(x_{|\Sigma|-1})$, in view of the continuity of the weights $\bm{\nu}_\alpha$ as a function of $\alpha \in [0,1]$, the above optimal codeword lengths are only characterized for the singleton point $\alpha=\alpha_1={0}$, giving the classical codeword lengths. For $\alpha \in (0,1)$ the problem should be reformulated.

Without loss of generality, and for the sake of simplicity of exposition of this example, suppose that $\mu(x_{1})-\mu(x_{2})< \mu(x_{|\Sigma|-1})-\mu(x_{|\Sigma|})$.  If we now consider the case for which $\alpha > \alpha_{1}$ and  $|\Sigma_o|=2$ the problem can be written as
\begin{align*}
{\mathbb L}_\alpha({\bf l}, \bm{\mu}) =\Big(\mu(x_{|\Sigma|})+\alpha \Big) l_{\max}+ \Big(\mu(x_{1})+\mu(x_{2})-\alpha \Big) l_{\min}+    \sum_{x   \in   \Sigma\setminus\Sigma_o\cup\Sigma^o}\mu(x) l(x) = \sum_{x \in \Sigma} l(x) \nu_\alpha(x) .
\end{align*}
For any $\alpha \in [\alpha_1,1)$ such that the conditions $\nu_\alpha(x_{|\Sigma|}) < \nu_\alpha(x_{|\Sigma|-1})$ and $\nu_\alpha(x_{3}) < \nu_\alpha(x_{2})$ hold, the optimal codeword lengths are given by $-\log \nu_\alpha(x), x \in \Sigma$ and this region is specified  by
\begin{align}\label{alpha1}
 \left\{ \alpha \in [0,1]: \alpha_{1} <\alpha < \min\{ \mu(x_{|\Sigma|-1})-\mu(x_{|\Sigma|}), \mu(x_{1})+\mu(x_{2})-2\mu(x_{3})\} \right\}.
\end{align}
The procedure is repeated and the problem is reformulated until all $\nu_{\alpha}(x)= \mu(x),~x \in  \Sigma\setminus\Sigma_o\cup\Sigma^o$ join the sets $\Sigma^o$ and $\Sigma_o$. Eventually, for large $\alpha$ sets $\Sigma^o$ and $\Sigma_o$ will merge together and $l(x)=l_{\min}=l_{\max}$.

\subsection{Optimal weights for all $\alpha \in [0,1]$ for specific probability distributions}

\noi Consider binary codewords and a source with $|{\Sigma}|=4$ and probability distribution
\begin{align*}
\displaystyle \bm{\mu}=\left(\begin{array}{cccc}
   \frac{8}{15} &  \frac{4}{15} &  \frac{2}{15} &  \frac{1}{15}
\end{array}\right).
\end{align*}
Using Algorithm \ref{charalambousTV_algorithm} one can find the optimal weight vector $\bm{v}_\alpha^{\dagger}$ for different values of $\alpha \in [0,1]$ for which pay-off \eqref{eq:03} of Problem~\ref{problem3} is minimized. The weights for all $\alpha \in [0,1]$ can be calculated iteratively by calculating $\alpha_k$ for all $k\in \{ 0, 1, 2, 3\}$ and noting that the weights vary linearly with $\alpha$ (Figure \ref{ex1_p}).

\begin{figure}[H]
\centering
\includegraphics[width=0.55\columnwidth]{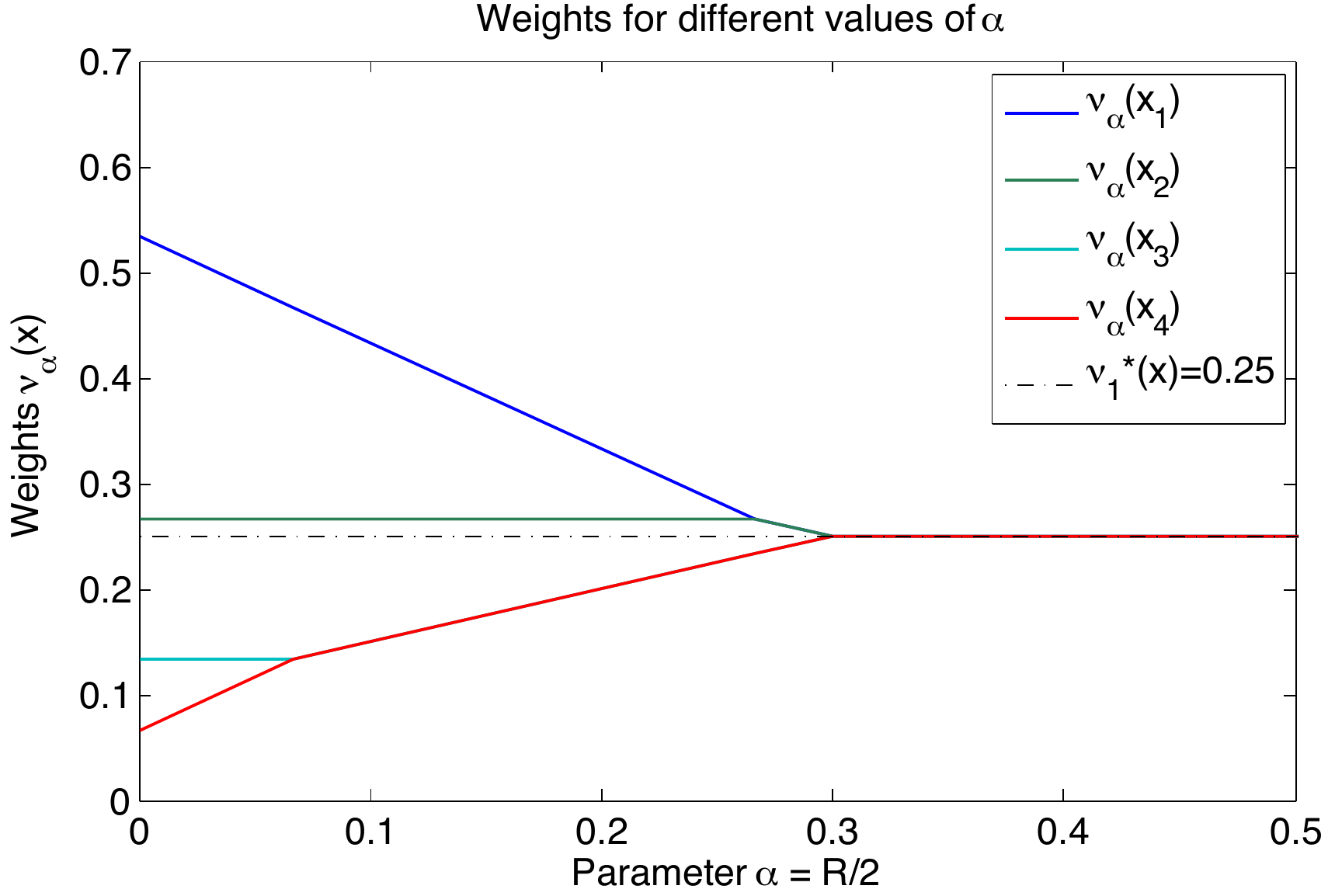}
\caption{A schematic representation of the weights for different values of $\alpha$ when ${\bm \mu}=(\frac{8}{15},\frac{4}{15},\frac{2}{15},\frac{1}{15})$.}\label{ex1_p}
\end{figure}

\noi The first merging occurs when
\begin{align}
\alpha_{1}= \min\{ \mu(x_{|\Sigma|-1})-\mu(x_{|\Sigma|}), \mu(x_{1})-\mu(x_{2})\} =\min \left\{\frac{2}{15}-\frac{1}{15},  \frac{8}{15}- \frac{4}{15}\right\}=\min\left\{\frac{1}{15}, \frac{4}{15}\right\}.
\end{align}
For $\alpha=\alpha_1$ the optimal weights according to are given by ${\bm \nu}_{\alpha_1}=(\frac{7}{15},\frac{4}{15},\frac{2}{15},\frac{2}{15})$.

Now consider binary codewords and a source with $|{\Sigma}|=5$ and probability distribution
\begin{align*}
\displaystyle \bm{\mu}=\left(\begin{array}{ccccc}
  \frac{16}{31} & \frac{8}{31} &  \frac{4}{31} &  \frac{2}{31} &  \frac{1}{31}
\end{array}\right).
\end{align*}
Using Algorithm \ref{charalambousTV_algorithm} one can find the optimal weight vector $\bm{v}_\alpha^{\dagger}$ for different values of $\alpha \in [0,1]$ for which pay-off \eqref{eq:03} of Problem~\ref{problem3} is minimized.

\begin{figure}[H]
\centering
\includegraphics[width=0.55\columnwidth]{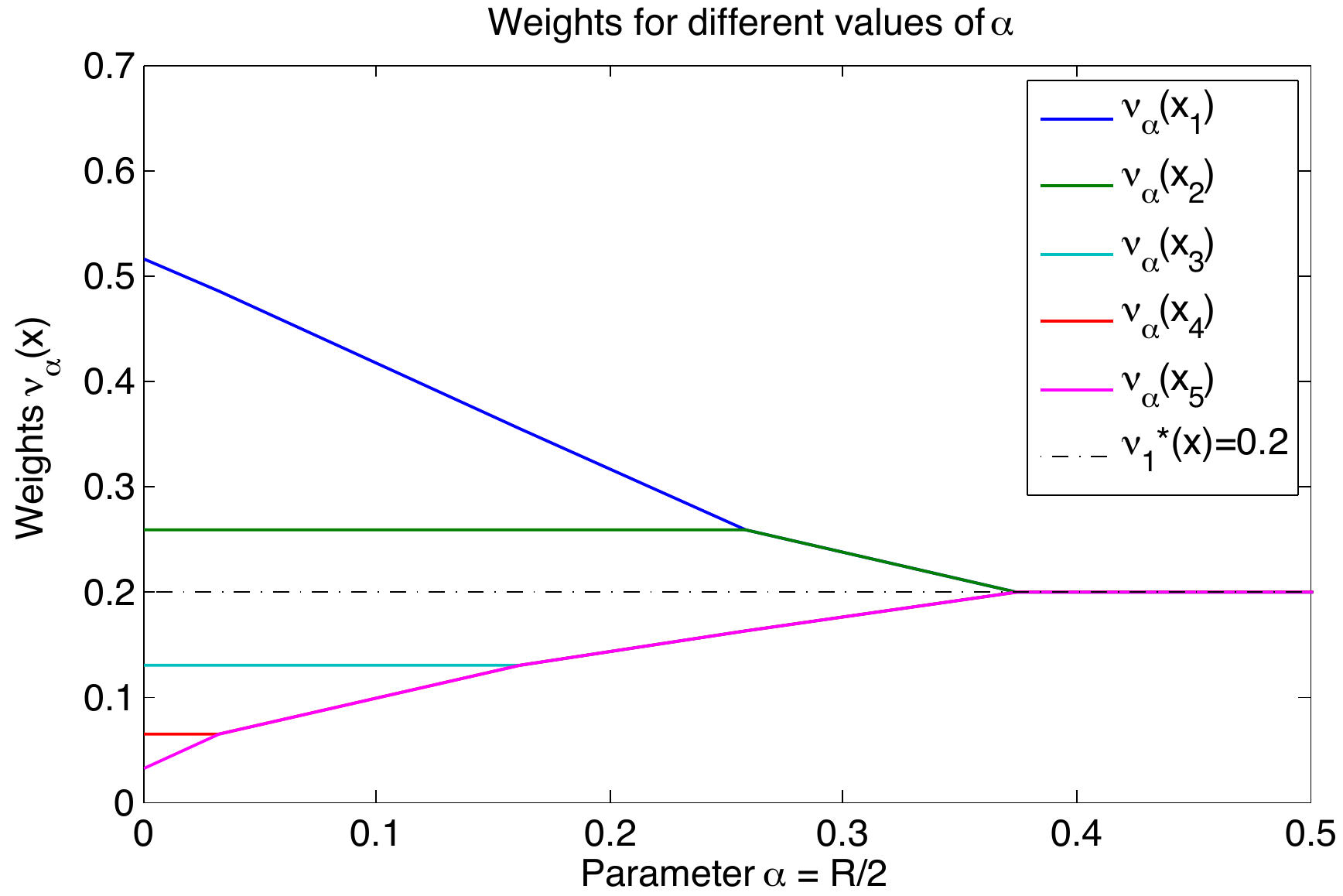}
\caption{A schematic representation of the weights for different values of $\alpha$ when $\mu=(\frac{16}{31}, \frac{8}{31},\frac{4}{31},\frac{2}{31},\frac{1}{31})$.}\label{ex2_p}
\end{figure}

Given the weights, we transformed the problem into a standard average length  coding problem, in which  the optimal codeword lengths can be easily calculated for all $\alpha$'s and they are equal to $ \lceil-\log(\nu_\alpha(x))\rceil, \forall x \in \Sigma$.

%
%
%
%
\section{Conclusions}\label{sec:conclusions}

The solution to a minimax average codeword length lossless coding problem for the class of sources described by the total variational ball is presented. First, the problem is transformed into an optimization one by finding the expresion of the maximization over the total variational ball. Subsequently, we give two solutions to the initial minimax coding problem for the class of sources. The first solution is given in terms  a waterfilling with two distinct levels. The second solution is given by a  procedure based on re-normalization of the  fixed nominal source probabilities according to a specific merging  rule of symbols. Several properties of the solution are introduced and an algorithm is presented which computes the minimax codeword lengths.  Illustrative examples corroborating the performance of the codes are presented.

Although, we consider the average codeword length, other pay offs can be considered, such as, average redundancy, average of exponential function of  the redundancy, pointwise redundancy etc., without much variation in the method of solution.

%
%
%
%

%
%
%
%
\appendices

\section{Proofs}

\subsection{Proof of Theorem~\ref{theorem:waterfilling}}\label{waterfilling}

\noi The problem can be expressed as 
\bea
\max_{s}\min_{t}\min_{l} \Big\{ \alpha (t-s) + \sum_{ x \in {\Sigma}} l(x) \mu(x)\Big\},  \quad \forall x\in \Sigma, \label{b31}
\eea
subject to the Kraft inequality and the constraints $l(x)\leq t$ $\forall x\in {\Sigma}$ and $l(x)\geq s$, $\forall x\in {\Sigma}$. 
By introducing real-valued  Lagrange multipliers $\lambda(x)$ associated with the constraint $l(x)\leq t$, $\forall x\in {\Sigma}$, $\sigma(x)$ associated with the constraint $l(x)\geq s$, $\forall x\in {\Sigma}$, and a real-valued Lagrange multiplier $\tau$ associate with the Kraft inequality, the augmented pay-off is defined by
\begin{align*} 
 {\mathbb L}_\alpha({\bf l}, {\bf p}, {\bf \lambda}, {\bf \sigma}, {\tau}) & \tri \alpha (t-s) + \sum_{ x \in {\Sigma}} l(x) \mu(x) +  \tau\left(\sum_{x \in {\Sigma}}D^{-l(x)}-1\right) \\
 &+\sum_{x \in {\Sigma}}\lambda(x) (l(x) - t) +\sum_{x \in {\Sigma}}\sigma(x) (s - l(x)) \; .
\end{align*}
The augmented pay-off is  a convex and differentiable function with respect to ${\bf l}$, $t$ and $s$. Denote the real-valued minimization over ${\bf l}, t, s, {\bf \lambda}, {\bf \sigma}, \tau$ by  ${\bf l}^\dagger, t^\dagger, s^\dagger, {\bf \lambda}^\dagger$,  ${\bf \sigma}^\dagger$ and $\tau^\dagger$. By the Karush-Kuhn-Tucker theorem, the following conditions are necessary and sufficient for  optimality.

\begin{eqnarray}
\frac{\partial }{\partial  l(x)}  {\mathbb L}_\alpha({\bf l}, {\bf \mu}, t, s, {\bf \lambda}, {\bf \sigma}, {\tau}) \arrowvert_{ {\bf l}={\bf l}^\dagger, {\bf \lambda}={\bf \lambda}^\dagger, t=t^\dagger, s=s^\dagger, {\bf \sigma}={\bf \sigma}^\dagger, \tau=\tau^\dagger} &=&0, \forall x\in {\Sigma} \label{lg1} \\
\frac{\partial }{\partial  t}  {\mathbb L}_\alpha({\bf l}, {\bf \mu}, t, s, {\bf \lambda}, {\bf \sigma}, {\tau}) \arrowvert_{ {\bf l}={\bf l}^\dagger, {\bf \lambda}={\bf \lambda}^\dagger, t=t^\dagger,  s=s^\dagger, {\bf \sigma}={\bf \sigma}^\dagger, \tau=\tau^\dagger}&=&0,  \label{lg1a} \\
\frac{\partial }{\partial  s}  {\mathbb L}_\alpha({\bf l}, {\bf \mu}, t, s, {\bf \lambda}, {\bf \sigma}, {\tau}) \arrowvert_{ {\bf l}={\bf l}^\dagger, {\bf \lambda}={\bf \lambda}^\dagger, t=t^\dagger,  s=s^\dagger, {\bf \sigma}={\bf \sigma}^\dagger, \tau=\tau^\dagger}&=&0,  \label{lg1b} \\
\sum_{x \in {\Sigma}}D^{-l^\dagger(x)}  -1\ &\leq& 0,  \label{lg2} \\
\tau^\dagger \cdot  \left(\sum_{x \in {\Sigma}}D^{-l^\dagger(x)}  -1\right)          &=&0, \label{lg3} \\
\tau^\dagger &\geq& 0, \label{lg4} \\
 l^\dagger(x)-t^\dagger &\leq& 0, \forall x\in {\Sigma},  \label{lg5} \\
\lambda^\dagger(x) \cdot  \left( l^\dagger(x)-t^\dagger \right)          &=&0, \forall x\in {\Sigma}, \label{lg6} \\
\lambda^\dagger(x) &\geq& 0, \forall x\in {\Sigma}. \label{lg7}\\
s^\dagger - l^\dagger(x) &\leq& 0, \forall x\in {\Sigma},  \label{lg5a} \\
\sigma^\dagger(x) \cdot  \left(s^\dagger- l^\dagger(x) \right)          &=&0, \forall x\in {\Sigma}, \label{lg6a} \\
\sigma^\dagger(x) &\geq& 0, \forall x\in {\Sigma}. \label{lg7a}
\end{eqnarray}
Differentiating with respect to ${\bf l}$, the following equation is obtained:
\begin{align}
\frac{\partial        }{\partial  l(x)}  {\mathbb L}_\alpha({\bf l}, {\bf p}, \lambda, \tau)  |_{ {\bf l}={\bf l}^\dagger, {\bf \lambda}={\bf \lambda}^\dagger, t=t^\dagger, \tau=\tau^\dagger}      &=\mu(x)  -\tau^\dagger D^{-l^\dagger(x)}\log_{e}D+\lambda^\dagger(x) - \sigma^\dagger(x)=0, \forall x\in {\Sigma}, \label{diff1} 
\end{align}
which after manipulation, it becomes
\begin{align}\label{li}
D^{-l^\dagger(x)}=\frac{\mu(x)+\lambda^\dagger(x)-\sigma^\dagger(x)}{\tau^\dagger \log_{e}D}, \hst  x \in {\Sigma}.
\end{align}
Differentiating with respect to $t$ and $s$, the following equations are obtained:
\begin{align}
\frac{\partial        }{\partial t}  {\mathbb L}_\alpha({\bf l}, {\bf p}, \lambda, \tau)  |_{ {\bf l}={\bf l}^\dagger, {\bf \lambda}={\bf \lambda}^\dagger, t=t^\dagger, \tau=\tau^\dagger}  &=  \alpha - \sum_{ x \in {\Sigma}}\lambda^\dagger(x) =0 \Rightarrow \sum_{x \in {\Sigma}} \lambda^\dagger(x) =\alpha. \label{diff2a} \\
\frac{\partial        }{\partial s}  {\mathbb L}_\alpha({\bf l}, {\bf p}, \lambda, \tau)  |_{ {\bf l}={\bf l}^\dagger, {\bf \lambda}={\bf \lambda}^\dagger, t=t^\dagger, \tau=\tau^\dagger}  &= -  \alpha + \sum_{ x \in {\Sigma}}\sigma^\dagger(x) =0  \Rightarrow \sum_{x \in {\Sigma}} \sigma^\dagger(x) =\alpha. \label{diff2b}
\end{align}
When $\tau^\dagger=0$,   \eqref{diff1} gives  $\mu(x)=\sigma^\dagger(x)-\lambda^\dagger(x), \forall  x \in {\Sigma}$. Since $\sigma^\dagger(x) = \lambda^\dagger(x)=0$ $\forall  x \in {\Sigma\setminus\Sigma^o \cup \Sigma_o} $, then it is concluded that $\mu(x)=0$. However, $\mu(x) >0$, $\forall  x \in {\Sigma\setminus\Sigma^o \cup \Sigma_o} $, and therefore, necessarily $\tau^\dagger >0$. Next, $\tau^\dagger$ is found by substituting \eqref{li} and \eqref{diff2a} into the Kraft equality to deduce
\begin{align*}
\sum_{x \in {\Sigma}   }D^{-l^\dagger(x)} & = \sum_{x \in  \mathcal{X}}\frac{\mu(x)+\lambda^\dagger(x)-\sigma^\dagger(x)}{\tau^\dagger  \log_{e}D} \\
& =\frac{ \sum_{ x \in {\Sigma} }\mu(x)}{\tau^\dagger\log_{e}D} +  \frac{ \sum_{ x \in {\Sigma} }\lambda^\dagger(x)}{\tau^\dagger \log_{e}D}-  \frac{ \sum_{ x \in {\Sigma} }\sigma^\dagger(x)}{\tau^\dagger \log_{e}D}  = \frac{1}{\tau^\dagger\log_{e}D} =1 .
\end{align*}
Therefore, $\tau^\dagger=\frac{1}{\log_{e}D}$.
\noi Substituting $\tau^\dagger$ into \eqref{li} yields
\bea\label{eq:55}
D^{-l^\dagger(x)}=\mu(x)+\lambda^\dagger(x)-\sigma^\dagger(x),  &   x \in {\Sigma}.
\eea
Let $w^\dagger(x) \triangleq D^{-l^\dagger(x)}$, i.e., the probabilities that correspond to the codeword lengths $l^\dagger(x)$; also, let  $\underline{w} \triangleq D^{-t^\dagger}$ and $\overline{w} \triangleq D^{-s^\dagger}$. From the Karush-Kuhn-Tucker conditions \eqref{lg6} and \eqref{lg7} we deduce the following. For all $x \in {\Sigma\setminus\Sigma^o \cup \Sigma_o}$, $l(x)<t$ and  $l(x)>s$; hence $\lambda^\dagger(x)=0$ and $\sigma^\dagger(x)=0$. For all $x \in {\Sigma_o}$, $l(x)<t$ and  $l(x)=s$; hence $\lambda^\dagger(x)=0$ and $\sigma^\dagger(x)>0$. For all $x \in {\Sigma^o}$, $l(x)=t$ and  $l(x)>s$; hence $\lambda^\dagger(x)>0$ and $\sigma^\dagger(x)=0$. Therefore, we can distinguish \eqref{eq:55} in the following cases:
\begin{align}
D^{-l^\dagger(x)}&=\mu(x), \hst  x \in{\Sigma\setminus\Sigma^o \cup \Sigma_o} , \label{eq:551}  \\
D^{-l^\dagger(x)}&=\mu(x)-\sigma^\dagger(x), \hst x \in \Sigma_o , \label{eq:552} \\
D^{-l^\dagger(x)}&=\mu(x)+\lambda^\dagger(x), \hst x \in \Sigma^o . \label{eq:553}
\end{align}
\noi Substituting $\lambda^\dagger(x)$ into \eqref{diff2a} we have $\sum_{x \in {\Sigma}} \big(D^{-l^\dagger(x)}-\mu(x)\big)=\alpha$, and substituting $w^\dagger(x) \triangleq D^{-l^\dagger(x)}$ we get
\begin{align}\label{water2}
\sum_{x \in {\Sigma}}\big(w^\dagger(x)-\mu(x)\big)=\alpha.
\end{align}
We know that $\lambda^\dagger(x)\neq 0$ only when $l^\dagger(x)=t^\dagger$; otherwise, $w^\dagger(x)=\mu(x)$. Hence, we can see that    $w^\dagger(x)-\mu(x)=(\underline{w}-\mu(x))^+$ and it is positive only when $l^\dagger(x)=t^\dagger$. Hence, equation \eqref{water2} becomes
\begin{align}\label{water3}
\sum_{x \in {\Sigma}} \big(\underline{w}-\mu(x)\big)^+=\alpha,
\end{align}
where $(f)^+ = \max(0,f)$. This is the classical waterfilling equation \cite[Section 9.4]{2006:Cover} and $\underline{w}$ is the water-level chosen, as shown in Figure \ref{waterproofing}.\\
\\ 
\noindent If we also substitute  $\sigma^\ddagger(x)$ into \eqref{diff2a} we have $\sum_{x \in {\Sigma}} \big(\mu(x)-D^{-l^\ddagger(x)}\big)=\alpha$, and substituting $w^\ddagger(x) \triangleq D^{-l^\ddagger(x)}$ we get
\begin{align}\label{water2a}
\sum_{x \in {\Sigma}}\big(\mu(x) - w^\ddagger(x)\big)=\alpha.
\end{align}
Hence, substituting $\overline{w} \triangleq D^{-s}$, equation \eqref{water2a} becomes
\begin{align}\label{water3a}
\sum_{x \in {\Sigma}} \big(\mu(x) - \overline{w}\big)^+=\alpha.
\end{align}

\begin{remark}
Note that it is possible to handle the case for which $\mu(x)=0$ for some $x\in \Sigma$, in exactly the same way. In this case, $x\in \Sigma^o$ and from equation \eqref{eq:553}, it is deduced that $\lambda^\dagger(x)=0$ at $\alpha=0$, and hence $D^{-l^\dagger(x)}=0$. For $\alpha>0$, it is obvious from equation \eqref{eq:553} that $D^{-l^\dagger(x)}=\lambda^\dagger(x)$.
\end{remark}



\subsection{Proof of Lemma \ref{lemma_optLengths}}\label{proof_lemma_optLengths}

\noi By introducing a real-valued  Lagrange multiplier $\lambda$ associated with the constraint the augmented pay-off is defined by
\begin{align} \label{Lagrangian_1}
 {\mathbb L}_\alpha({\bf l},\bm{\mu}, \lambda) \tri \sum_{ x \in {\Sigma\setminus\Sigma^o \cup \Sigma_o}} l(x) \mu(x)+\left( \sum_{ x \in \Sigma^o}  \mu(x)+\alpha \right) l_{\max} &+
\left( \sum_{ x \in \Sigma_o}  \mu(x)-\alpha \right) l_{\min} \nonumber \\
&+  \lambda \left(\sum_{x \in \Sigma} D^{-l(x)}-1\right).
\end{align}
The augmented pay-off is  a convex and  differentiable function  with respect to ${\bf l}$. Denote the real-valued minimization of \eqref{Lagrangian_1} over ${\bf l}, \lambda$ by  ${\bf l}^\dagger$ and $\lambda^\dagger$. By the Karush-Kuhn-Tucker theorem, the following conditions are necessary and sufficient for optimality:
\begin{eqnarray}
\frac{\partial        }{\partial  l(x)}  {\mathbb L}_\alpha({\bf l}, \bm{\mu}, \lambda) |_{ {\bf l}={\bf l}^\dagger, \lambda=\lambda^\dagger} &=&0,  \label{lg1} \\
\sum_{x \in \Sigma}D^{-l^\dagger(x)}  -1\ &\leq& 0,  \label{lg2} \\
\lambda^\dagger \cdot  \left(\sum_{x \in \Sigma}D^{-l^\dagger(x)}  -1\right)          &=&0, \label{lg3} \\
\lambda^\dagger &\geq& 0. \label{lg4}
\end{eqnarray}
Differentiating with respect to ${\bf l}$, when $x \in {\Sigma\setminus\Sigma^o \cup \Sigma_o}$, $x \in \Sigma_o$ and  $x \in \Sigma^o$ the following equations are obtained:
\begin{align}
\frac{\partial   }{\partial  l(x)}  {\mathbb L}_\alpha({\bf l}, \bm{\mu}, \lambda)  \arrowvert_{ {\bf l}={\bf l}^\dagger, \lambda=\lambda^\dagger}         &=\mu(x)-\lambda^\dagger D^{-l^\dagger(x)}\log_{e}D=0, \hst  x \in {\Sigma\setminus\Sigma^o \cup \Sigma_o}  \label{diff1e} \\
\frac{\partial        }{\partial  l(x)}  {\mathbb L}_\alpha({\bf l}, \bm{\mu}, \lambda)  \arrowvert_{ {\bf l}={\bf l}^\dagger, \lambda=\lambda^\dagger}   &=\sum_{ x \in \Sigma_o}  \mu(x)-\alpha -\lambda^\dagger |\Sigma_o| D^{-l^\dagger(x)}\log_{e}D =0, \hst \: x \in\Sigma_o. \label{diff2e} \\
\frac{\partial        }{\partial  l(x)}  {\mathbb L}_\alpha({\bf l}, \bm{\mu}, \lambda)  \arrowvert_{ {\bf l}={\bf l}^\dagger, \lambda=\lambda^\dagger}   &=\sum_{ x \in \Sigma^o}  \mu(x)+\alpha -\lambda^\dagger |\Sigma^o| D^{-l^\dagger(x)}\log_{e}D =0, \hst \: x \in\Sigma^o. \label{diff3e}
\end{align}
When $\lambda^\dagger=0$,   \eqref{diff1e} gives  $\mu(x)=0, \forall  x \in {\Sigma\setminus\Sigma^o \cup \Sigma_o} $. Since $\mu(x) >0$ then necessarily $\lambda^\dagger >0$. Therefore, \eqref{diff1e}, \eqref{diff2e} and \eqref{diff3e} are equivalent to the following identities:
 \begin{align}
D^{-l^\dagger(x)}&=\frac{\mu(x)}{\lambda^\dagger \log_{e}D}, \hst  x \in{\Sigma\setminus\Sigma^o \cup \Sigma_o} , \label{l1}  \\
D^{-l^\dagger(x)}&=\frac{\sum_{ x \in \Sigma_o}  \mu(x)-\alpha }{\lambda^\dagger  |\Sigma_o| \log_{e}D}, \hst x \in \Sigma_o , \label{l2} \\
D^{-l^\dagger(x)}&=\frac{\sum_{ x \in \Sigma^o}  \mu(x)+\alpha}{\lambda^\dagger  |\Sigma^o| \log_{e}D}, \hst x \in \Sigma^o . \label{l3}
\end{align}
Next, $\lambda^\dagger$ is found by substituting \eqref{l1}, \eqref{l2} and \eqref{l3} into the Kraft equality to deduce:
\begin{align*}
\sum_{x \in \Sigma  }D^{-l^\dagger(x)} & = \sum_{x \in  {\Sigma\setminus\Sigma^o \cup \Sigma_o} }D^{-l^\dagger(x)}+\sum_{x \in \Sigma_o}D^{-l^\dagger(x)} +\sum_{x \in  \Sigma^o}D^{-l^\dagger(x)}\\
& = \sum_{x \in  {\Sigma\setminus\Sigma^o \cup \Sigma_o} }\frac{\mu(x)}{\lambda^\dagger \log_{e}D} + \sum_{x \in \Sigma_o}\frac{\sum_{ x \in \Sigma_o}  \mu(x)-\alpha }{\lambda^\dagger  |\Sigma_o| \log_{e}D}+ \sum_{x \in \Sigma^o}\frac{\sum_{ x \in \Sigma^o}  \mu(x)+\alpha}{\lambda^\dagger  |\Sigma^o| \log_{e}D}\\
& = \frac{ \sum_{x \in  {\Sigma\setminus\Sigma^o \cup \Sigma_o} }\mu(x)}{\lambda^\dagger \log_{e}D} +  |\Sigma_o| \frac{\sum_{ x \in \Sigma_o}  \mu(x)-\alpha }{\lambda^\dagger  |\Sigma_o| \log_{e}D}+ |\Sigma^o|\frac{\sum_{ x \in \Sigma^o}  \mu(x)+\alpha}{\lambda^\dagger  |\Sigma^o| \log_{e}D}\\
& = \frac{\sum_{ x \in  {\Sigma\setminus\Sigma^o \cup \Sigma_o} }\mu(x)+\sum_{x \in \Sigma_o}\mu(x)+ \sum_{x \in \Sigma^o}\mu(x)}{\lambda^\dagger\log_{e}D} \\
& = \frac{1}{\lambda^\dagger\log_{e}D} =1 .
\end{align*}
Substituting $\lambda^\dagger$ into\eqref{l1}, \eqref{l2} and \eqref{l3}  yields
\bes
D^{-l^\dagger(x)}= \left\{ \begin{array}{ll}
\mu(x),  &   x \in  {\Sigma\setminus\Sigma^o \cup \Sigma_o}    \\
 \frac{ \sum_{ x \in \Sigma^o}  \mu(x)+\alpha }{|\Sigma^o|} , &  x \in \Sigma^o \\
\frac{ \sum_{ x \in \Sigma_o}  \mu(x)-\alpha }{|\Sigma_o|}, &   x \in \Sigma_o .
\end{array} \right.
\ees
Finally, from the previous expression one obtains \eqref{eq:solution1}.

\subsection{Proof of Lemma \ref{lemma_weights1}}\label{proof_lemma_weights1}

We can show the validity of the statements in Lemma~\ref{lemma_weights1} by considering five cases. More specifically,
\begin{enumerate}
\item[(i)] $x,y \in  \Sigma\setminus\Sigma_o\cup\Sigma^o $: then $ \nu_\alpha(x)=\mu(x) \leq \mu(y) = \nu_\alpha(y)$,~$\forall~\alpha \in [0,1]$;
\item[(ii)] $x,y \in  \Sigma^o $:  $\nu_\alpha(x)=\nu_\alpha(y)=\underline{\nu}_\alpha \triangleq \min_{ x \in \Sigma} \nu_\alpha(x)$;
\item[(iii)] $x,y \in  \Sigma_o $:  $\nu_\alpha(x)=\nu_\alpha(y)=\overline{\nu}_\alpha  \triangleq \max_{ x \in \Sigma} \nu_\alpha(x)$;
\item[(iv)] $x \in \Sigma^o$,  $y \in \Sigma\setminus\Sigma_o\cup\Sigma^o $ (or $x \in \Sigma\setminus\Sigma_o\cup\Sigma^o $,  $y \in \Sigma^o$): consider the case $x \in  \Sigma^o$,  $y \in \Sigma\setminus\Sigma_o\cup\Sigma^o$. Then, by taking derivatives
\begin{align}
\frac{\partial \nu_{\alpha}(y)}{\partial \alpha}&=0, \hst y \in \Sigma\setminus\Sigma_o\cup\Sigma^o,   \label{lemmW:1} \\
\frac{\partial \nu_{\alpha}(x)}{\partial \alpha}&=\frac{1}{|\Sigma^o|}>0, \hst x \in \Sigma^o.  \label{lemmW:2}
\end{align}
\item[(v)] $x \in \Sigma_o$,  $y \in \Sigma\setminus\Sigma_o\cup\Sigma^o $ (or $x \in \Sigma\setminus\Sigma_o\cup\Sigma^o $,  $y \in \Sigma_o$): consider the case $x \in  \Sigma_o$,  $y \in \Sigma\setminus\Sigma_o\cup\Sigma^o$. Then, by taking derivatives
\begin{align}
\frac{\partial \nu_{\alpha}(y)}{\partial \alpha}&=0, \hst y \in \Sigma\setminus\Sigma_o\cup\Sigma^o,   \label{lemmW:3} \\
\frac{\partial \nu_{\alpha}(x)}{\partial \alpha}&=-\frac{1}{|\Sigma^o|}<0, \hst x \in \Sigma_o.  \label{lemmW:4}
\end{align}
\end{enumerate}
According to \eqref{lemmW:1}, \eqref{lemmW:2},  \eqref{lemmW:3},  \eqref{lemmW:4}, for $\alpha=0, \nu_\alpha(y)|_{\alpha =0}=\mu(y)\geq \nu_\alpha(x)|_{\alpha =0}=\nu(x)$. As a function of $\alpha \in [0,1]$, for $y \in \Sigma\setminus\Sigma_o\cup\Sigma^o$ the weight $\nu_{\alpha}(y)$ remains unchanged, for $x \in \Sigma^o$ the weight $\nu_{\alpha}(z)$ increases, and for $z \in  \Sigma_o$ the weight $\nu_{\alpha}(z)$ decreases. Hence, since $\nu_\alpha(\cdot)$ is a continuous function with respect to $\alpha$, at some $\alpha =\alpha^\prime$, $\nu_{\alpha^\prime}(x)=\nu_{\alpha^\prime}(y)=\underline{\nu}_{\alpha^\prime}$. Suppose that for some $\alpha=\alpha^\prime+d \alpha$, $d \alpha>0$,  $\nu_\alpha(x)\neq \nu_\alpha(y)$.  Then, the lowest weight will increase and the largest weight will remain constant as a function of $\alpha \in [0,1]$  according to \eqref{lemmW:2} and \eqref{lemmW:1}, respectively. We follow similar arguments for $\nu_{\alpha^\prime}(x)=\nu_{\alpha^\prime}(z)=\overline{\nu}_{\alpha^\prime}$.

\subsection{Proof of Lemma~\ref{prop1}}\label{proof_prop1}

The validity of the statement  is shown by perfect induction. Without loss of generality and for simplicity of the proof, suppose that $\beta_1 < \gamma_1$.
\begin{align*}
 \mbox{Firstly, for} \hso \beta=\beta_{1}: \hst  \nu_\alpha(x_{|\Sigma|})=\nu_\alpha(x_{|\Sigma|-1}) \leq \nu_\alpha(x_{|\Sigma|-2}) \leq \ldots \leq \nu_\alpha(x_{1}). \hst
\end{align*}
Suppose that, when $\alpha=\beta_1+d\alpha \in [0,1]$, $d\alpha>0$, then $\nu_\alpha(x_{|\Sigma|}) \neq \nu_\alpha(x_{|\Sigma|-1}) $. Then,
\begin{align}
 {\mathbb L}_\alpha({\bf l}, {\bm \mu}) =\Big(\mu(x_{|\Sigma|})+\mu(x_{|\Sigma|-1})+\alpha \Big) l_{\max}+ \Big(\mu(x_{1})-\alpha \Big) l_{\min}+    \sum_{x   \in   \Sigma\setminus\Sigma_o\cup\Sigma^o}\mu(x) l(x) , \nonumber
\end{align}
and the weights will be of the form $\nu_{\alpha}(x) = \mu(x)$ for $x \in  \Sigma\setminus\Sigma_o\cup\Sigma^o$, $\nu_{\alpha}(x) =\mu(x_{1})-\alpha$ for $x\in \Sigma_{o}$ and $\nu_{\alpha}(x) =\mu(x_{|\Sigma|})+\alpha$ for $x\in\Sigma^{o,1}=\Big\{ x \in \{ x_{|\Sigma|-1}, x_{|\Sigma|} \} \Big\}$.
The rate of change of these weights with respect to $\alpha$ is
\begin{align}
\frac{\partial \nu_{\alpha}(x)}{\partial \alpha}&= 0, ~x \in  \Sigma\setminus\Sigma_o\cup\Sigma^o, \\
\frac{\partial  \nu_{\alpha}(y)}{\partial \alpha}&=1>0, ~y \in\Sigma^{o,1} \label{minprob}.
\end{align}
Hence, the largest of the two stays constant, while the smallest would increase and therefore they meet again. This contradicts the  assumption that  $\nu_\alpha(x_{|\Sigma|}) \neq \nu_\alpha(x_{|\Sigma|-1}) $ for $\alpha>\beta_1$. Therefore, $\nu_\alpha(x_{|\Sigma|}) = \nu_\alpha(x_{|\Sigma|-1}), ~\forall \alpha \in [\beta_1,1)$.\\
Similarly, for $\alpha>\alpha_k, ~k \in \{2,\ldots, {|\Sigma|}-1\}$,  suppose  the weights are
\begin{align*}
\nu_\alpha(x_{|\Sigma|})=\nu_\alpha(x_{|\Sigma|-1})= \ldots =\nu_\alpha(x_{|\Sigma|-k_1})=\nu_{\alpha}^\flat.
\end{align*}
Then, the pay-off is written as
\begin{align*}
 {\mathbb L}_\alpha({\bf l}, {\bm \mu}) = \sum_{ x \in {\Sigma\setminus\Sigma^{o} \cup \Sigma_o}} l(x) \mu(x)+\left( \sum_{ x \in \Sigma^{o,k_1}}  \mu(x)+\alpha \right) l_{\max} +
\left( \sum_{ x \in \Sigma_{o,k_2}}  \mu(x)-\alpha \right) l_{\min}
\end{align*}
\noi Hence,
\begin{align}
\frac{\partial \nu_{\alpha}(x)}{\partial \alpha}&=0, \hst x \in  {\Sigma\setminus\Sigma^{o} \cup \Sigma_o}, \hso \alpha \in (\alpha_k, 1), \label{wi_prob}  \\
| \Sigma^{o,k_1}|\frac{\partial \nu_{\alpha}^\dagger}{\partial \alpha}&=1>0,  \hst x \in \Sigma^{o,k_1}, \hso \alpha \in (\alpha_k, 1).
\end{align}
Finally, in the case that $\alpha>\alpha_{k+1}, ~k \in \{2,\ldots, |\Sigma|-2\}$, if any of the weights $\nu_\alpha( x),~x \in \Sigma^{o,k_1}$,  changes differently than another, then, either at least one probability will become smaller than others and give a higher codeword length, or it will increase faster than the others and hence according to \eqref{wi_prob}, it will stay constant to meet the other weights.
Therefore, the change in this new set of probabilities should be the same, and the cardinality of $\Sigma^{o,k_1}$ increases by one, that is, $|\Sigma^{o,k_1}|=\left|k_1+1\right|, ~k_1\in \{ 1,\ldots |\Sigma|-2 \}$.

With similar arguments we prove that weights $\nu_\alpha( x),~x \in \Sigma_{o,k_2}$ change in the same way and the cardinality of $\Sigma_{o,k_2}$ increases by one.

\bibliographystyle{IEEEtran}
\bibliography{bibliografia}

\end{document}